\theoremstyle{thmstyleone}%
\newtheorem{theorem}{Theorem}
\newtheorem{proposition}{Proposition}
\newtheorem{corollary}{Corollary}
\newtheorem{lemma}{Lemma}
\theoremstyle{thmstyletwo}%
\newtheorem{example}{Example}%
\theoremstyle{thmstylethree}%
\begin{document}

\title[Article Title]{Quaternary Conjucyclic Codes with an Application to EAQEC Codes}


\author*[1]{\fnm{Md Ajaharul} \sur{Hossain}}\email{mdajaharul@iiitnr.edu.in}

\author[1]{\fnm{Ramakrishna} \sur{Bandi}}\email{ramakrishna@iiitnr.edu.in}
\equalcont{These authors contributed equally to this work.}


\affil*[1]{\orgdiv{Department of Science and Applied Mathematics}, \orgname{IIIT Naya Raipur}, \orgaddress{ \city{Atal Nagar Nava Raipur}, \postcode{493661}, \state{Chhattisgarh}, \country{India}}}




\abstract{

Conjucyclic codes are part of a family of codes that includes cyclic, constacyclic, and quasi-cyclic codes, among others. Despite their importance in quantum error correction, they have not received much attention in the literature. This paper focuses on additive conjucyclic (ACC) codes over $\mathbb{F}_4$ and investigates their properties. Specifically, we derive the duals of ACC codes using a trace inner product and obtain the trace hull and its dimension. Also, establish a necessary and sufficient condition for an additive code to have a complementary dual (ACD). Additionally, we identify a necessary condition for an additive conjucyclic complementary pair of codes over $\mathbb{F}_4$. Furthermore, we show that the trace code of an ACC code is cyclic and provide a condition for the trace code of an ACC code to be LCD. To demonstrate the practical application of our findings, we construct some good entanglement-assisted quantum error-correcting (EAQEC) codes using the trace code of ACC codes.
}

\keywords{Conjucyclic codes, cyclic codes, additive codes, quantum codes}



\maketitle

\section{Introduction}\label{sec1}

There are many scenarios in which time is money, and two of the most prominent examples are quantum computing and quantum communication. Classical computers and even supercomputers cannot solve many computational problems in a reasonable amount of time, but quantum computers can do so quickly. This will have a significant impact on our technological world. However, quantum computing and communication require delicate handling of quantum states, which are sensitive to hardware imperfections and environmental interference. Decoherence is a natural phenomenon in quantum systems that can cause errors. Quantum error-correcting codes are crucial for preserving coherent states against such noise and interference. Over the past decade, the theory of quantum error-correcting codes has made significant progress. Several constructions, such as the CSS and geometric constructions, are available in the literature \cite{ST1996, C1997, KR2005, BI2008, BI2014}. The CSS construction leads to linear quantum codes, also known as stabilizer codes. Calderbank and Shor \cite{Cal96}, and Steane \cite{ST1996} independently proved the existence of good quantum error-correcting codes and developed the CSS construction (also known as dual-containing property) to construct quantum stabilizer codes from classical error-correcting codes. However, the dual-containing property can be a hindrance to constructing a large class of codes, such as LDPC codes, since codes with a larger minimum distance with the dual-containing property are rare. Additionally, the theory behind stabilizer codes makes it challenging to construct examples of stabilizer codes.

The most significant class of linear quantum codes is the Entanglement Assisted Quantum Error Correcting (EAQEC) Codes, which were introduced by Brun et al. in \cite{TD06}. These codes have a significant advantage over others as they cover a larger family of classical codes and can be used to generate quantum codes from them. EAQEC codes relax the self-orthogonality criterion for linear classical codes to construct quantum codes using pre-shared entangled bits (ebits). In \cite{Fan16}, Fan et al. provided a construction of entanglement-assisted quantum maximum distance separable (EAQ-MDS) codes with a small number of pre-shared maximally entangled states. Furthermore, Qian and Zhang \cite{Qian2015} constructed maximal-entanglement EAQEC codes and proved the existence of asymptotically good EAQEC codes in the binary case. In recent years, there has been significant interest among researchers in constructing EAQEC Codes \cite{ch17,ch18, PhysRevA.103.L020601, G2018, luo2019, WB08,hossain2023linear} from classical error-correcting codes.

In addition to the well-studied linear codes, Delsarte et al. presented an important family of codes known as additive codes in their work on association schemes \cite{DL73}. Generally, additive codes are subgroups of the underlying abelian group. In \cite{H07}, Huffman provided an algebraic structure for additive cyclic codes over $\mathbb{F}_4$. Various additive codes have been well-studied, and their applications in quantum error correction and secret sharing schemes have been explored in works such as \cite{CRS, KL17, BN20, HC13, EM11, KP03}. Calderbank et al. introduced a new family of codes called conjucyclic codes over quaternary fields in \cite{CRS} and established a direct link between quaternary additive codes and binary quantum error-correcting codes. A linear code $C$ of length $n$ over $\mathbb{F}_{q^2}$ is called a conjucyclic code if for any codeword $c=(c_0,c_1,\ldots, c_{n-1})$ in $C$, the conjucyclic shift $(\Bar{c}_{n-1}, c_0, \ldots, c_{n-2})$,  where $\Bar{c}_{n-1}$ is the conjugate of $c_{n-1}$ in $\mathbb{F}_{q^2}$, also belongs to $C$. Although conjucyclic codes are closed under conjugated cyclic shift and belong to the cyclic, quasi-cyclic, and constacyclic code families, they have a structural disadvantage compared to other codes. Since they lack a canonical polynomial representation, studying them has proven difficult and remains an open research problem. Recently, in \cite{ACD}, Abualrub et al. used a linear algebra approach to investigate the algebraic structure of conjucyclic codes over Quaternary fields. They established an isomorphic map between a conjucyclic code of length $n$ and a cyclic code of length $2n$, simplifying the analysis of conjucyclic codes and enabling them to be viewed through the lens of cyclic codes. Since the only conjucyclic codes that are linear are the trivial codes, additive conjucyclic (ACC) codes are being studied, particularly over $\mathbb{F}_4$, for obvious reasons, \cite{ACD}. Lv and Li also studied additive conjucyclic codes over the finite field $\mathbb{F}_{q^2}$ in \cite{LL}.

Although Conjucyclic codes have not been extensively explored, their potential applications in quantum error correction have sparked the interest of researchers \cite{CRS}. The authors of \cite{CRS} noted the lack of good ACC codes over $\mathbb{F}_4$, which motivated our search for good ACC codes and quantum codes through ACC codes. This paper focuses on conjucyclic codes in relation to the trace dual and establishes necessary and sufficient conditions for an ACC code to have ACD. We also construct linear codes from ACC codes, which exhibit excellent parameters. Finally, we present some good EAQEC codes obtained from conjucyclic codes.

\section{Preliminaries}
\label{Sec.2}
Throughout this paper, we use $\mathbb{F}_4$ to denote the quaternary field of order $4$. Specifically, we define $\mathbb{F}_4$ as $\{0,1,\omega,\omega^2=\omega+1\}$, where $\omega$ is a primitive root of the minimal polynomial $x^2+x+1=0$, and thus $\omega^3=1$. Two elements $\alpha$ and $\beta$ of a finite field are said to be conjugates if they are the roots of the same minimal polynomial. Therefore, the conjugate of $\omega \in \mathbb{F}_4$ is $\omega^2=\omega+1$, and vice versa. We denote the conjugate of $\alpha \in \mathbb{F}_4$ as $\Bar{\alpha}$.

A linear code of length $n$ over $\mathbb{F}_4$ is a subspace of $\mathbb{F}_4^n$, and its algebraic structure and properties are identical to those of a subspace. An $(n,2^k)$ additive code over $\mathbb{F}_4$ is an additive subgroup of $\mathbb{F}_4^n$ with $2^k$ elements. Unlike linear codes, additive codes do not support scalar multiplications. An additive code $C$ of length $n$ over $\mathbb{F}_4$ is an $\mathbb{F}_2$-subspace of $\mathbb{F}_4^n$. So we denote the ${\mathbb{F}_2}$-dimension of an additive code $C$ over $\mathbb{F}_4$ as $\operatorname{dim}_{\mathbb{F}_2}(C)$, which is the cardinality of the basis set for $C$ over $\mathbb{F}_2$. An $\mathbb{F}_2$-base of $C$ is called a generator matrix of $C$. Note that the $\mathbb{F}_2$ dimension of an additive code $C$ can be greater than the length of the code $C$, unlike a linear code. To illustrate, consider an additive code $C$ over $\mathbb{F}_4$ defined as $C=\{(0,0), (1,0),(0,1), (1,1), (\omega,0), (1+\omega,0), (\omega,1), (1+\omega,1)\}$. The generator matrix of $C$ is $G=\begin{bmatrix} 1 & 0 & \omega \\ 0 & 1 & 0 \end{bmatrix}^T$. The code $C$ has $\mathbb{F}_2$ dimension 3, i.e., $\operatorname{dim}_{\mathbb{F}_2}(C)=3$. Unless otherwise specified, all codes discussed over $\mathbb{F}_4$ in this paper are assumed to be additive codes.

We define a trace mapping $Tr: \mathbb{F}_4 \mapsto \mathbb{F}_2$ as $Tr(\alpha)=\alpha+\Bar{\alpha}$. This mapping is naturally extended to $\mathbb{F}_4^n$ as $Tr(a)=(Tr(a_0),Tr(a_1), \ldots, Tr(a_{n-1}))$, where $a=(a_0,a_1, \ldots, a_{n-1})$. In this paper, we consider two types of inner products: the Euclidean inner product and the trace inner product. The Euclidean inner product of two elements $a=(a_0,a_1, \ldots, a_{n-1})$ and $b=(b_0,b_1, \ldots, b_{n-1})$ in $\mathbb{F}_4^n$ is denoted by $[a,b]$ and is defined as $[a,b]=\sum_{i=0}^{n-1} a_ib_i$. However, the Euclidean inner product is not useful for studying additive codes since the dual code $C^\perp= \{x \in \mathbb{F}_4^n~:~ [x,c]=0 ~\forall c \in C \}$ of an additive code $C$ may not be an additive code. Therefore, a new inner product, the trace inner product, was introduced in \cite{ACD}.

The trace inner product of $a$ and $b$ in $\mathbb{F}_4$ is denoted by $\langle a, b \rangle$ and is defined as $\langle a, b\rangle= Tr([a, b])$. The trace dual of an additive code $C$ is denoted by $C^{\perp_{Tr}}$ and defined as follows: $C^{\perp_{Tr}}=\{x\in F_q^n:(\forall c\in C)( \langle x,c \rangle=Tr([x,~c])=0)\}$. We have $x\in {(C^{\perp_{Tr}})}^{\perp_{Tr}}$ if and only if $\langle x,c \rangle=Tr([x,~c])=0$ for all $c\in C^{\perp_{Tr}}$ if and only if $x\in C$. Therefore ${(C^{\perp_{Tr}})}^{\perp_{Tr}}=C$. The trace dual $C^{\perp_{Tr}}$ of $C$ is also an additive code, and the $\mathbb{F}_2$ basis of $C^{\perp_{Tr}}$ is called a parity check matrix of $C$. An additive code $C$ is said to be self-orthogonal with respect to its trace dual if $C\subseteq C^{\perp_{Tr}}$, and it is called self-dual if $C=C^{\perp_{Tr}}$. Finally, two additive codes $C_1$ and $C_2$ of length $n$ over $\mathbb{F}_4$ are called an additive complementary pair (ACP) if $C_1\cap C_2={0}$ and $C_1+C_2=\mathbb{F}_4^n$. Setting $C_1=C$ and $C_2=C^{\perp_{Tr}}$, we say that $C$ is an ACD code, that is, $C$ is an ACD code when $C$ and its trace dual $C^{\perp_{Tr}}$ meet trivially, i.e., $C\cap C^{\perp_{Tr}}={0}$.

Let $A = (a_{ij})_{m \times n}$ be a matrix over $\mathbb{F}_4$. Then we define the conjugate and trace of matrix $A$ as $\Bar{A} = (\Bar{a}_{ij})_{m \times n}$ and $Tr(A) = {(Tr(a_{ij}))}_{m \times n}$, are denoted as $\Bar{A}$ and $Tr(A)$, respectively. Note here that the trace of a matrix which is the sum of its diagonal elements in linear algebra is different from the trace we defined here. We also introduce a new multiplication operation called trace multiplication, denoted by $\odot_{Tr}$, which is defined by $a \odot_{Tr} b = Tr([a,b])$, where $[a,b]$ is the Euclidean inner product of the vectors $a$ and $b$ in $\mathbb{F}_4$. This operation can be extended to sets of matrices over $\mathbb{F}_4$ as follows: for any two matrices $A$ and $B$, we define $A \odot_{Tr} B = Tr(A \cdot B)$, where $\cdot$ denotes the usual matrix multiplication. The trace multiplication of two matrices exists if the usual matrix multiplication exists between them.

We now use trace multiplication to prove some results on additive codes, which help discuss other sections. If $G$ and $H$ generator and parity check matrices of an additive code $C$, then $G \odot_{Tr} H^T = 0$.

\begin{proposition}
\label{tvproduct}
Let $a=(a_0,a_1,\ldots,a_{n-1})\in \mathbb{F}_4^n$ and $b=(b_0,b_1,\ldots,b_{n-1})\in \mathbb{F}_2^n$. Then $a\odot_{Tr} b=Tr(a)\cdot b$.
\end{proposition}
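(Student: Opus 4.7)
The plan is to unwind the definition of $\odot_{Tr}$ and then exploit two basic facts: the $\mathbb{F}_2$-linearity of the trace $Tr\colon \mathbb{F}_4 \to \mathbb{F}_2$, and the fact that every element of $\mathbb{F}_2$ is fixed under conjugation in $\mathbb{F}_4$. By definition, $a \odot_{Tr} b = Tr([a,b]) = Tr\!\left(\sum_{i=0}^{n-1} a_i b_i\right)$, so the goal is to push the trace inside the sum and then past each scalar $b_i$.

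First I would invoke the additivity of $Tr$ to write $Tr\!\left(\sum_{i} a_i b_i\right) = \sum_{i} Tr(a_i b_i)$. Next, the key step: since $b_i \in \mathbb{F}_2$, we have $\overline{b_i} = b_i$, hence
\[
Tr(a_i b_i) = a_i b_i + \overline{a_i b_i} = a_i b_i + \overline{a_i}\,\overline{b_i} = (a_i + \overline{a_i}) b_i = Tr(a_i)\, b_i.
\]
Summing over $i$ gives $\sum_i Tr(a_i) b_i = Tr(a) \cdot b$, which is precisely the claim.

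There is no real obstacle here; the statement is essentially a restatement of the fact that $Tr$ is $\mathbb{F}_2$-linear, so $\mathbb{F}_2$-scalars commute past it. The only thing to be careful about is making the transition from the scalar $b_i \in \mathbb{F}_2$ (viewed inside $\mathbb{F}_4$ to form the product $a_i b_i$) to its role as an $\mathbb{F}_2$-coefficient in the dot product $Tr(a) \cdot b$; this is exactly the step justified by $\overline{b_i} = b_i$.
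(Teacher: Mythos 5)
Your proof is correct and follows essentially the same route as the paper's: both unwind $a\odot_{Tr}b=Tr\bigl(\sum_i a_ib_i\bigr)$ and reduce to $Tr(a_ib_i)=Tr(a_i)b_i$ using that $b_i\in\mathbb{F}_2$ is fixed by conjugation (the paper phrases this via the Frobenius identity $(\sum_i a_ib_i)^2=\sum_i a_i^2b_i$, you via additivity of $Tr$ plus $\overline{b_i}=b_i$).
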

\begin{proof}
 We can see that
$a\odot_{Tr} b  = Tr([a, ~b])
= Tr\left(\sum_{i=0}^{n-1} a_ib_i\right)
= \sum_{i=0}^{n-1} a_ib_i+{\left(\sum_{i=0}^{n-1} a_ib_i\right)}^2
= \sum_{i=0}^{n-1} (a_i+a_i^2)\cdot b_i
= Tr(a)\cdot b.$
\end{proof}

\begin{proposition}
\label{tmproduct}
   Let $A$ and $B$ be two matrices over $\mathbb{F}_4$ and $E$ a matrix over $\mathbb{F}_2$ such that $A\cdot E$ and $B\cdot A$ exist. Then
   \begin{enumerate}
       \item[(i)] $A \odot_{Tr} E= Tr(A) \cdot E$.
       \item[(ii)] $B\odot_{Tr} (A\cdot E)=(B\odot_{Tr} A)\cdot E$.
   \end{enumerate}
    
\end{proposition}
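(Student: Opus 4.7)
The plan is to treat Proposition~\ref{tmproduct} as a matrix-level lift of the vector identity already established in Proposition~\ref{tvproduct}, then bootstrap part~(ii) from part~(i) together with associativity of ordinary matrix multiplication.

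For part~(i), I would unravel the definition $A \odot_{Tr} E = Tr(A \cdot E)$ entry by entry. The $(i,j)$-entry of $A \cdot E$ is $\sum_k a_{ik} e_{kj}$, which is exactly the Euclidean inner product $[\,\text{row}_i(A),\, \text{col}_j(E)\,]$. Since $\text{col}_j(E) \in \mathbb{F}_2^m$, Proposition~\ref{tvproduct} applies and gives
\[
Tr\!\left(\sum_k a_{ik} e_{kj}\right) = \text{row}_i(A) \odot_{Tr} \text{col}_j(E) = Tr(\text{row}_i(A)) \cdot \text{col}_j(E),
\]
which is precisely the $(i,j)$-entry of $Tr(A)\cdot E$. (If one prefers a direct computation, the additivity of $Tr$ together with $e_{kj}^2 = e_{kj}$ lets one pull the $\mathbb{F}_2$-scalars out of the Frobenius-sum expressing $Tr$, yielding the same identity.) Thus $A \odot_{Tr} E = Tr(A) \cdot E$.

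For part~(ii), I would first note that ordinary matrix multiplication is associative over $\mathbb{F}_4$, so $B \cdot (A \cdot E) = (B \cdot A) \cdot E$. Applying the definition of $\odot_{Tr}$ and then part~(i) to the product $(B \cdot A) \odot_{Tr} E$ (which is legitimate because $B \cdot A$ has entries in $\mathbb{F}_4$ while $E$ has entries in $\mathbb{F}_2$), one obtains
\[
B \odot_{Tr} (A \cdot E) = Tr\bigl(B \cdot (A \cdot E)\bigr) = Tr\bigl((B \cdot A) \cdot E\bigr) = Tr(B \cdot A) \cdot E = (B \odot_{Tr} A) \cdot E.
\]

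There is no real obstacle here; the only subtlety is the asymmetry in the hypotheses, namely that the $\mathbb{F}_2$ matrix $E$ must sit on the right so that Proposition~\ref{tvproduct} is applicable column-wise. I would make sure to invoke this hypothesis explicitly when reducing part~(ii) to part~(i), since the identity would fail if an $\mathbb{F}_4$ matrix were interposed between $Tr(\cdot)$ and the $\mathbb{F}_2$ factor.
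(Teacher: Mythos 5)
Your proposal is correct and follows essentially the same route as the paper: an entrywise reduction for part~(i) (the paper simply repeats the Frobenius computation of Proposition~\ref{tvproduct} inline rather than citing it, as you do) and associativity of matrix multiplication plus part~(i) for part~(ii). Your closing remark about the $\mathbb{F}_2$ factor needing to sit on the right is a sound observation that the paper leaves implicit.
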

\begin{proof}
(i) Let $A=(a_{ij})_{m \times l}$ and $E=(e_{ij})_{l \times n}$, where $a_{ij},e_{ij}$ are entries from $\mathbb{F}_4$ and $\mathbb{F}_2$, respectively . Then $Tr(A \cdot E) = (Tr(c_{ij}))_{m \times n}$, where $c_{ij}=\sum_{k=1}^{l} a_{ik}e_{kj}$. Now,
$Tr(c_{ij}) = Tr\left(\sum_{k=1}^{l} a_{ik}e_{kj}\right)
= \sum_{k=1}^{l} a_{ik}e_{kj}+{\left(\sum_{k=1}^{l} a_{ik}e_{kj}\right)}^2
= \sum_{k=1}^{l} (a_{ik}+a_{ik}^2)\cdot e_{kj}
=\sum_{k=1}^{l} Tr(a_{ik})e_{kj}
= Tr(a_i)\cdot e_j$. Therefore $A \odot_{Tr} E= Tr(A \cdot E)=Tr(A) \cdot E$.\\[2mm]

(ii) We have $B\odot_{Tr} (A\cdot E) = Tr(B\cdot (A\cdot E)) = Tr((B\cdot A)\cdot E )=(Tr(B\cdot A))\cdot E=(B\odot_{Tr} A)\cdot E$
using (i).
\end{proof}

\begin{corollary}
\label{actdc}
    Let $C$ be an $(n,2^k)$ additive code over $\mathbb{F}_4$, whose generator and parity check matrices are $G$ and $H$, respectively. Then $C$ is an additive complementary trace dual code if and only if $\operatorname{det} (G\odot_{Tr}G^T)\neq 0$.
\end{corollary}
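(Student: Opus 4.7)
The plan is to translate the set-theoretic condition $C \cap C^{\perp_{Tr}} = \{0\}$ into a rank condition on the $k \times k$ matrix $G \odot_{Tr} G^T$ over $\mathbb{F}_2$, and then invoke the standard equivalence between invertibility and nonvanishing determinant. Note that $G \odot_{Tr} G^T = Tr(G G^T)$ is indeed an honest matrix over $\mathbb{F}_2$, since $Tr$ lands in $\mathbb{F}_2$.

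First I would parametrise $C$: because the rows of $G$ form an $\mathbb{F}_2$-basis of $C$, every codeword has a unique representation $c = vG$ for some row vector $v \in \mathbb{F}_2^{1 \times k}$, with $c = 0$ iff $v = 0$. Next I would recast membership in $C^{\perp_{Tr}}$ in matrix form. By the $\mathbb{F}_2$-bilinearity of the trace inner product, $c \in C^{\perp_{Tr}}$ iff $\langle c, g_i\rangle = 0$ for every row $g_i$ of $G$, which is the single row-vector equation $c \odot_{Tr} G^T = 0$. Substituting $c = vG$ turns this into $Tr(v G G^T) = 0$. Because the entries of $v$ lie in $\mathbb{F}_2$, the trace map commutes with left multiplication by $v$ entrywise (this is the transpose-analogue of Proposition \ref{tmproduct}(i), obtained immediately from $Tr(v_i x) = v_i \, Tr(x)$ for $v_i \in \mathbb{F}_2$). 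So the condition collapses to $v \cdot Tr(G G^T) = v(G \odot_{Tr} G^T) = 0$.

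Putting these together, the map $v \mapsto vG$ is a bijection between the left null space in $\mathbb{F}_2^k$ of $G \odot_{Tr} G^T$ and the intersection $C \cap C^{\perp_{Tr}}$. Hence $C$ is ACD iff that null space is trivial, iff the $k \times k$ matrix $G \odot_{Tr} G^T$ is invertible over $\mathbb{F}_2$, iff $\det(G \odot_{Tr} G^T) \neq 0$. The only mild obstacle is bookkeeping: keeping straight that the coefficient vector $v$ lives over $\mathbb{F}_2$ while $G$ lives over $\mathbb{F}_4$, and invoking Proposition \ref{tmproduct}(i) in the correct orientation so that one can legitimately pull $v$ outside of $Tr$. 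Once that is set up, the argument is essentially two lines of linear algebra; the parity check matrix $H$ does not enter explicitly.
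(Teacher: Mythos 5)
Your proposal is correct and follows essentially the same route as the paper: both reduce membership of $vG$ in $C^{\perp_{Tr}}$ to the linear condition $v\,(G\odot_{Tr}G^T)=0$ over $\mathbb{F}_2$ via the trace-multiplication identities, and then read off ACD-ness from invertibility of the $k\times k$ matrix. Your version is marginally cleaner in that it packages both directions as a single bijection between $C\cap C^{\perp_{Tr}}$ and the left null space, and never needs the parity check matrix $H$, which the paper invokes (somewhat superfluously) in its forward direction.
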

 \begin{proof}
 Assume that $\operatorname{det} (G\odot_{Tr}G^T)\neq 0$. Let $x\in C\cap C^{\perp_{Tr}}$. Then, $x\in C$ and $x\in C^{\perp_{Tr}}$. Therefore $x=\alpha G$ and $x=\beta H$, where $\alpha \in \mathbb{F}_2^k$, $\beta \in \mathbb{F}_2^{2n-k}$. Thus 
 $\alpha G = \beta H$,
 implies that $G^T\alpha^T = H^T\beta^T$. Consider, $(G\odot_{Tr}G^T)\cdot \alpha^T=G\odot_{Tr} (G^T\cdot \alpha^T)= G\odot_{Tr} (H^T\cdot \beta^T)=(G\odot_{Tr} H^T)\cdot \beta^T$, using Proposition \ref{tmproduct}. Since $\operatorname{det} (G\odot_{Tr}G^T)\neq 0,~\alpha=0$. This implies that $x=0$.
 
 Conversely, suppose $C$ is an additive complementary trace dual code. Assume that $\operatorname{det} (G\odot_{Tr}G^T)=0$. Then there exists a non-zero $\alpha \in \mathbb{F}_2^k$ such that $\alpha \cdot (G\odot_{Tr}G^T)=0$. Let, $vG$ be any codeword of $C$ for some $v\in \mathbb{F}_2^k$. Then $(\alpha G) \odot_{Tr} {(vG)}^T = (\alpha\cdot (G\odot_{Tr}G^T))\cdot v^T
 = 0\cdot v^T=0$.
 Therefore $\alpha G \in C^{\perp_{Tr}}$, a contradiction.
 \end{proof} 

Much like linear codes, additive codes possess important properties, such as self-orthogonality, self-duality, and the property of containing their duals with respect to the trace inner product. In the following theorem, we establish certain conditions on the generator and parity check matrices of the additive code.

 \begin{proposition}
 \label{dualc}
        Let $C$ be an additive code of length $n$ over $\mathbb{F}_4$, whose generator and parity check matrices are $G$ and $H$, respectively. Then 
        \begin{enumerate}
            \item $C^{\perp_{Tr}}\subseteq C$ if and only if $H\odot_{Tr} H^T=0$.
            \item $C\subseteq C^{\perp_{Tr}}$ if and only if $G\odot_{Tr} G^T=0$.
            \item $C= C^{\perp_{Tr}}$ if and only if $G\odot_{Tr} G^T=0$ and $H\odot_{Tr} H^T=0$.
        \end{enumerate}
        \end{proposition}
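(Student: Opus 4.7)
The three assertions are tightly linked. Observe that (3) is the conjunction of (1) and (2), since $C = C^{\perp_{Tr}}$ if and only if both inclusions $C \subseteq C^{\perp_{Tr}}$ and $C^{\perp_{Tr}} \subseteq C$ hold. Moreover, (1) follows from (2) applied to the code $C^{\perp_{Tr}}$: $H$ is a generator matrix of $C^{\perp_{Tr}}$, its trace dual is $(C^{\perp_{Tr}})^{\perp_{Tr}} = C$, and the inclusion $C^{\perp_{Tr}} \subseteq C$ becomes the inclusion of $C^{\perp_{Tr}}$ in its own trace dual. Thus the only substantive task is to prove (2).

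For (2), I would use that $G$ is an $\mathbb{F}_2$-basis of $C$, so every codeword has the form $vG$ for some row vector $v \in \mathbb{F}_2^k$. The inclusion $C \subseteq C^{\perp_{Tr}}$ is therefore equivalent to $(vG) \odot_{Tr} (uG)^T = 0$ for every pair $u, v \in \mathbb{F}_2^k$. The plan is to reduce this condition to the single matrix equation $G \odot_{Tr} G^T = 0$ via the identity
\[
(vG) \odot_{Tr} (uG)^T \;=\; v \cdot (G \odot_{Tr} G^T) \cdot u^T.
\]

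To establish the identity, first write $(uG)^T = G^T u^T$ and apply Proposition \ref{tmproduct}(ii) with $B = vG$, $A = G^T$, $E = u^T$ to pull $u^T$ out on the right, obtaining $(vG) \odot_{Tr} (G^T u^T) = ((vG) \odot_{Tr} G^T) \cdot u^T$. For the left factor, since $v$ has entries in $\mathbb{F}_2$ and $Tr(xe) = Tr(x)e$ for any $x \in \mathbb{F}_4$ and $e \in \mathbb{F}_2$ (because $e^2 = e$), a short entrywise calculation mirroring Proposition \ref{tmproduct}(i) for left multiplication gives $(vG) \odot_{Tr} G^T = Tr(v \cdot GG^T) = v \cdot Tr(GG^T) = v \cdot (G \odot_{Tr} G^T)$.

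Once the identity is in hand, the forward direction is immediate. For the converse, set $M = G \odot_{Tr} G^T \in \mathbb{F}_2^{k \times k}$; if $vMu^T = 0$ for all $u, v \in \mathbb{F}_2^k$, then letting $v$ and $u$ run over the standard basis vectors forces every entry of $M$ to vanish, hence $M = 0$. The main obstacle is merely bookkeeping: verifying the left-multiplication analogue of Proposition \ref{tmproduct}(i), which is not stated explicitly but follows directly from the $\mathbb{F}_2$-linearity of $Tr$ together with $e^2 = e$ for $e \in \mathbb{F}_2$.
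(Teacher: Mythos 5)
Your proof is correct and uses essentially the same machinery as the paper: expressing codewords as $\mathbb{F}_2$-combinations of generator rows, the associativity property of Proposition \ref{tmproduct}, and the biduality $(C^{\perp_{Tr}})^{\perp_{Tr}}=C$. The only difference is organizational — you prove (2) directly and derive (1) by applying it to $C^{\perp_{Tr}}$, whereas the paper proves (1) directly and asserts the rest follow similarly — and you are somewhat more explicit about the left-multiplication analogue of Proposition \ref{tmproduct}(i) that both arguments implicitly need.
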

\begin{proof}

We prove that the first result and the rest follow similarly. Suppose that $C^{\perp_{Tr}}\subseteq C$. Then the row space of the generator matrix $H$ of $C^{\perp_{Tr}}$ is the $\mathbb{F}_2$-subspace of the row space of the generator matrix $G$ of $C$. We know $G\odot_{Tr} H^T=0$. Therefore $H\odot_{Tr}H^T=0$.\\

Conversely, assume that $H\odot_{Tr}H^T=0$. Let $\alpha H= x \in C^{\perp_{Tr}}$, where $\alpha \in \mathbb{F}_2^{n-k}$. Then $(H\odot_{Tr} H^T)\cdot \alpha^T=0$, which implies $H\odot_{Tr} {(\alpha H)}^T=H\odot_{Tr} x^T=0$. Therefore $ x\in C $. Then $C^{\perp_{Tr}}\subseteq C$.
\end{proof}
\begin{example}
Let $C$ be an $(3,2^4)$ additive code over $\mathbb{F}_4$ with generator matrix $G=\begin{bmatrix} \omega^2 & \omega & \omega & \omega \\ 
\omega^2 & \omega^2 & \omega & \omega \\
\omega^2 & \omega^2  & \omega^2 & \omega \end{bmatrix}^T.$
The code $C$ is dual containing, i.e., $C^{\perp_{Tr}}\subseteq C$. The parity check matrix for $C$ corresponding to the generator matrix $G$ is $H=\begin{bmatrix}1 &1&0\\ 0&1&1\end{bmatrix}$ and we have $H\odot_{Tr} H^T=0$.
\end{example}

\section{Additive conjucyclic codes}

In this section, we will be discussing additive conjucyclic codes and presenting the main results of the paper. These results will be utilized in subsequent sections to prove other results.

A code $C$ is considered a cyclic code of length $n$ over the finite field $\mathbb{F}_4$ if the cyclic shift $\sigma(c)=(c_{n-1},c_0,\ldots,c_{n-2})$ of each codeword $c=(c_0,c_1,\ldots,c_{n-1})\in C$ is also a codeword in $C$. It is well-known that a cyclic code is isomorphic to an ideal of the quotient ring $\frac{\mathbb{F}_4[x]}{\langle x^n-1 \rangle}$. Now, we define a conjucyclic code as follows: Let $C$ be an additive code of length $n$ over the finite field $\mathbb{F}_{4}$. Then $C$ is said to be an additive conjucyclic code if for each codeword $c=(c_0,c_1,\ldots,c_{n-1})\in C$, its conjucyclic shift $T(c)=(\Bar{c}_{n-1},c_0,\ldots,c_{n-2})\in C$, where $\Bar{c}_{n-1}$ is the conjugate of $c_{n-1}$ in 
$\mathbb{F}_{4}$.

Abualrub et al. \cite{ACD} provided an algebraic structure of additive conjucyclic codes using the $\mathbb{F}_2$-linear isomorphism $\Psi: \mathbb{F}_2^{2n} \mapsto \mathbb{F}_4^{n}$ such that $\Psi (u_0, u_1, \ldots, u_{n-1}, u_n, u_{n+1}, \ldots, u_{2n-1}) = (u_0+(u_0+u_n) \omega, u_1+(u_1+u_{n+1}) \omega, \ldots, u_{n-1}+(u_{n-1}+u_{2n-1})\omega)$. They proved that for every additive conjucyclic code $C$ of length $n$, there exists a cyclic code $D$ of length $2n$ such that $C=\Psi (D)$. This enables us to examine the structure of conjucyclic codes. Furthermore, they showed that there is no non-trivial linear conjucyclic code \cite[Theorem 20]{AD}. Thus, our study will solely focus on additive conjucyclic codes.

We call a pair of cyclic codes $C_1$ and $C_2$ satisfying the LCP condition, i.e., $C_1\cap C_2={0}$ and $C_1+C_2=\mathbb{F}_4^n$, as cyclic complementary pair (CCP) of codes. We first derive a condition for two cyclic codes $C_1$ and $C_2$ to be CCP using their generator polynomials.
 
 \begin{lemma}\cite{ACD}
 \label{agm}
 Let $C$ be an additive conjucyclic code of length $n$ given by $C_{g(x)}=\Psi(D_{g(x)})$, where $g(x)$ is the generating polynomial of the binary cyclic code $D$ with $deg(g(x))=k$, and $\xi_{g(x)}$ the codeword in $D$ corresponding to the generating polynomial $g(x)$ and $\eta_{g(x)}=\Psi(\xi_{g(x)})$. Then the generator matrix $G_{\eta_{g(x)}}$ of $C$ is of the form ${\begin{bmatrix}\Psi(\xi_{g(x)})& T(\Psi(\xi_{g(x)}))& \ldots &T^{2n-k-1}(\Psi(\xi_{g(x)}))\end{bmatrix}}^T$.
 \end{lemma}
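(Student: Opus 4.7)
The plan is to transport the standard generator matrix of the binary cyclic code $D_{g(x)}$ along the $\mathbb{F}_2$-linear isomorphism $\Psi$. Since $D_{g(x)}$ is a binary cyclic code of length $2n$ with generator polynomial $g(x)$ of degree $k$, its $\mathbb{F}_2$-dimension is $2n-k$ and a standard $\mathbb{F}_2$-basis for it is $\xi_{g(x)}, \sigma(\xi_{g(x)}), \ldots, \sigma^{2n-k-1}(\xi_{g(x)})$, where $\sigma$ denotes the binary cyclic shift on $\mathbb{F}_2^{2n}$. Because $\Psi$ is a bijective $\mathbb{F}_2$-linear map, the $\Psi$-images of these vectors form an $\mathbb{F}_2$-basis of $C = \Psi(D_{g(x)})$, so all that remains is to identify each image with the stated row $T^i(\eta_{g(x)})$.

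The key step will be verifying the intertwining relation $\Psi \circ \sigma = T \circ \Psi$, where $T$ is the conjucyclic shift on $\mathbb{F}_4^n$. To establish it, I would first rewrite the defining formula for $\Psi$ component-wise as $\Psi(u)_i = u_i\omega^2 + u_{n+i}\omega$, using $1+\omega=\omega^2$. For every $i \geq 1$ the $i$-th components of $\Psi(\sigma(u))$ and $T(\Psi(u))$ both equal $u_{i-1}\omega^2 + u_{n+i-1}\omega$ by inspection. The delicate case is $i = 0$: the cyclic shift wraps $u_{2n-1}$ into the leading position, while the conjucyclic shift conjugates the last coordinate of $\Psi(u)$, and one has to check that $\overline{u_{n-1}\omega^2 + u_{2n-1}\omega} = u_{2n-1}\omega^2 + u_{n-1}\omega$. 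This is a short computation using $\bar\omega = \omega^2$ together with the $\mathbb{F}_2$-linearity of conjugation, and it is essentially the only obstacle in the proof.

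Finally, with the intertwining relation $\Psi \circ \sigma = T \circ \Psi$ in hand, a routine induction on $i$ gives $\Psi(\sigma^i(\xi_{g(x)})) = T^i(\Psi(\xi_{g(x)})) = T^i(\eta_{g(x)})$ for every $i \geq 0$. Specialising to $i = 0, 1, \ldots, 2n-k-1$ produces exactly the $2n-k$ rows claimed in the lemma, and $\mathbb{F}_2$-linear independence transfers from the binary cyclic shifts of $\xi_{g(x)}$ through the isomorphism $\Psi$. This yields the asserted generator matrix for $C$.
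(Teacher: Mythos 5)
Your proof is correct and follows the same route as the source: the lemma is cited from \cite{ACD} without proof in the paper, and the intertwining identity $\Psi\circ\sigma = T\circ\Psi$ that you verify component-wise (including the conjugation check at the wrap-around coordinate) is exactly the relation $\Psi(\sigma^j(\xi_{g(x)}))=T^j(\eta_{g(x)})$ that the paper quotes as Equation~7 of \cite{ACD}. Transporting the standard basis $\xi_{g(x)},\sigma(\xi_{g(x)}),\ldots,\sigma^{2n-k-1}(\xi_{g(x)})$ of the $[2n,2n-k]$ cyclic code through the $\mathbb{F}_2$-isomorphism $\Psi$ is precisely the intended argument.
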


 \begin{proposition}\cite[Lemma 6]{ACD}
     \label{innsep}
     Let $a$ and $b$ be two vectors in $\mathbb{F}_2^{2n}$. Then \[\langle \Psi(a),\Psi(b)\rangle=[a,b].\]
 \end{proposition}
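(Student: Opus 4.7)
The plan is to unfold the definitions on both sides and reduce the claim to a coordinate-wise identity in $\mathbb{F}_4$. Write $a=(a_0,\ldots,a_{2n-1})$ and $b=(b_0,\ldots,b_{2n-1})$ in $\mathbb{F}_2^{2n}$, so that the $i$-th coordinate of $\Psi(a)$ is $\Psi(a)_i = a_i + (a_i+a_{n+i})\omega$ and likewise for $\Psi(b)$. Since the trace is $\mathbb{F}_2$-linear (indeed just $\alpha\mapsto \alpha+\alpha^2$), I have
\[
\langle \Psi(a),\Psi(b)\rangle \;=\; Tr\!\left(\sum_{i=0}^{n-1}\Psi(a)_i\Psi(b)_i\right)\;=\;\sum_{i=0}^{n-1}Tr\bigl(\Psi(a)_i\Psi(b)_i\bigr),
\]
so it is enough to show that $Tr(\Psi(a)_i\Psi(b)_i)=a_ib_i+a_{n+i}b_{n+i}$ for each $i$; summing over $i$ then yields $[a,b]$.

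To verify the coordinate identity, I would expand the product in $\mathbb{F}_4$ using $\omega^2=\omega+1$ to get
\[
\Psi(a)_i\Psi(b)_i \;=\; \bigl(a_ib_i+(a_i+a_{n+i})(b_i+b_{n+i})\bigr) + \bigl(a_i(b_i+b_{n+i})+(a_i+a_{n+i})b_i+(a_i+a_{n+i})(b_i+b_{n+i})\bigr)\omega,
\]
where both the scalar and the $\omega$-coefficient lie in $\mathbb{F}_2$. The key observation is that for any $\alpha,\beta\in\mathbb{F}_2$ one has $Tr(\alpha+\beta\omega)=(\alpha+\beta\omega)+(\alpha+\beta\omega^2)=\beta(\omega+\omega^2)=\beta$, so the scalar part is annihilated and only the $\omega$-coefficient survives. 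Expanding that coefficient in $\mathbb{F}_2$ (where $2=0$) collapses it to $a_ib_i+a_{n+i}b_{n+i}$, which is exactly what is needed.

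I do not expect a real obstacle here; the whole argument is a one-shot computation, and the only thing to be careful about is keeping arithmetic over $\mathbb{F}_2$ versus $\mathbb{F}_4$ straight and correctly applying $\omega+\omega^2=1$. If I wanted a slightly cleaner presentation I could first record the lemma $Tr(\alpha+\beta\omega)=\beta$ for $\alpha,\beta\in\mathbb{F}_2$ as a standalone identity, and then reduce the proposition to a direct expansion of $(p+q\omega)(r+s\omega)$ for $p,q,r,s\in\mathbb{F}_2$, reading off the $\omega$-coefficient as $ps+qr+qs$ and substituting $p=a_i$, $q=a_i+a_{n+i}$, $r=b_i$, $s=b_i+b_{n+i}$.
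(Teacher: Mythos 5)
Your proof is correct. The paper does not actually prove this proposition---it is imported verbatim as \cite[Lemma 6]{ACD}---so there is no internal argument to compare against; your direct coordinate-wise computation is the natural one and checks out: with $p=a_i$, $q=a_i+a_{n+i}$, $r=b_i$, $s=b_i+b_{n+i}$ one gets $(p+q\omega)(r+s\omega)=(pr+qs)+(ps+qr+qs)\omega$, the identity $Tr(\alpha+\beta\omega)=\beta$ for $\alpha,\beta\in\mathbb{F}_2$ kills the scalar part, and $ps+qr+qs=a_ib_i+a_{n+i}b_{n+i}$ over $\mathbb{F}_2$, which sums to $[a,b]$. The only presentational suggestion is the one you already made yourself: state $Tr(\alpha+\beta\omega)=\beta$ as a standalone identity first, since it is also the cleanest way to see why the Euclidean inner product fails for additive codes while the trace inner product works.
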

 From \cite[Equation 7]{ACD}, we have the following equation $\Psi(\sigma^j(\xi_{g(x)}))=T^j(\eta_{g(x)})$, $\forall j \geq 1$.
 \begin{proposition}
\label{csccs}
Let $a=(a_0,a_1,\ldots,a_{n-1},a_n,a_{n+1},\ldots,a_{2n-1})\in D\subseteq \mathbb{F}_2^{2n}$ and $b=(b_0,b_1,\ldots,b_{n-1},b_n,b_{n+1},\ldots,b_{2n-1})\in D\subseteq \mathbb{F}_2^{2n}$ . Then $ T^i(\Psi(a)) \odot_{Tr} T^j(\Psi(b))=\sigma^i(a)\cdot \sigma^j(b)$ for all $i,j\geq 0$.
\end{proposition}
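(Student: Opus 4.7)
The plan is to reduce the identity to Proposition \ref{innsep} by first establishing the intertwining relation $T \circ \Psi = \Psi \circ \sigma$ between the conjucyclic shift $T$ on $\mathbb{F}_4^n$ and the ordinary cyclic shift $\sigma$ on $\mathbb{F}_2^{2n}$. Once this identity of $\mathbb{F}_2$-linear maps is secured, iteration will give $T^i(\Psi(a)) = \Psi(\sigma^i(a))$ and $T^j(\Psi(b)) = \Psi(\sigma^j(b))$ for all $i, j \geq 0$, and then the trace multiplication on the left collapses to the Euclidean product on the right via Proposition \ref{innsep}. The equation $\Psi(\sigma^j(\xi_{g(x)})) = T^j(\eta_{g(x)})$ cited from \cite{ACD} records exactly this on one generating codeword, but I need the relation on arbitrary $a,b \in \mathbb{F}_2^{2n}$, so the main task is to promote it to a map identity.

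The verification of $T \circ \Psi = \Psi \circ \sigma$ is a direct coordinate-wise computation. For a generic $u = (u_0,\ldots,u_{n-1},u_n,\ldots,u_{2n-1})$, the $i$-th component of $\Psi(u)$ is $\psi_i = u_i + (u_i + u_{n+i})\omega$. Applying $T$ cyclically shifts $\Psi(u)$ to the right and conjugates the old last entry $\psi_{n-1}$; using $\overline{\omega} = 1+\omega$, one computes $\overline{\psi_{n-1}} = u_{2n-1} + (u_{2n-1}+u_{n-1})\omega$, which is precisely the zeroth component of $\Psi(\sigma(u))$, since $\sigma$ moves $u_{2n-1}$ to position $0$ and $u_{n-1}$ to position $n$. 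The remaining components of $T(\Psi(u))$ and $\Psi(\sigma(u))$ match by a straightforward shift in both halves. Thus $T \circ \Psi = \Psi \circ \sigma$, and an easy induction yields $T^k \circ \Psi = \Psi \circ \sigma^k$ for every $k \geq 0$.

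With the intertwining in place, the rest is mechanical. I chain
\[
T^i(\Psi(a)) \odot_{Tr} T^j(\Psi(b)) = \Psi(\sigma^i(a)) \odot_{Tr} \Psi(\sigma^j(b)) = \sigma^i(a)\cdot \sigma^j(b),
\]
where the first equality applies the iterated intertwining to both $a$ and $b$, and the second uses $\Psi(u)\odot_{Tr}\Psi(v) = \langle\Psi(u),\Psi(v)\rangle = [u,v]$, i.e.\ the definition of $\odot_{Tr}$ combined with Proposition \ref{innsep}. The only real obstacle will be the coordinate bookkeeping in checking $T \circ \Psi = \Psi \circ \sigma$, specifically the interaction of the wrap-around with the conjugation on the $\omega$-component, but once the definitions are laid out carefully this is a short and routine verification, after which the proposition follows.
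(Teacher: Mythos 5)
Your proof is correct and follows essentially the same route as the paper: the paper's one-line proof invokes the intertwining identity $\Psi(\sigma^j(\cdot))=T^j(\Psi(\cdot))$ (cited from \cite{ACD} for the generating codeword) together with Proposition \ref{innsep}, which is exactly your chain of equalities. The only difference is that you explicitly verify the coordinate computation showing $T\circ\Psi=\Psi\circ\sigma$ on arbitrary vectors, which the paper leaves to the citation.
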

\begin{proof}
    From Proposition \ref{innsep} and the above equation the proof follows.
\end{proof}
\begin{proposition}\cite[Corollary $2.3$]{GKG20}
\label{lcp1}
Let $E_i$ be an $[n,k_i]$ linear code over $\mathbb{F}_q$ with generator matrix $G_i$ and parity check matrix $H_i$, respectively, for $i=1,2$. If $(E_1, E_2)$ is a linear complementary pair (LCP), then $\operatorname{rank}(G_1H_2^T)=k_1$ and $\operatorname{rank}(G_2H_1^T)=k_2$.
\end{proposition}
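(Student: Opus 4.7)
The plan is to show directly that the matrix $G_1 H_2^T$ is square of size $k_1 \times k_1$ and has trivial left kernel; the claim for $G_2 H_1^T$ will then follow by symmetry after swapping the roles of the two codes.

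First I would handle the dimension bookkeeping. Since $(E_1,E_2)$ is an LCP, the conditions $E_1 \cap E_2 = \{0\}$ and $E_1 + E_2 = \mathbb{F}_q^n$ force $k_1 + k_2 = n$. Consequently $H_2$ is an $(n-k_2)\times n = k_1 \times n$ matrix, and the product $G_1 H_2^T$ is a $k_1 \times k_1$ square matrix. So establishing $\operatorname{rank}(G_1 H_2^T) = k_1$ is equivalent to showing it is nonsingular.

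Next I would test the left null space. Suppose $\alpha \in \mathbb{F}_q^{k_1}$ satisfies $\alpha (G_1 H_2^T) = 0$, equivalently $H_2 (\alpha G_1)^T = 0$. The right side says that the row vector $\alpha G_1$, which automatically lies in $E_1$, also lies in $\ker H_2 = E_2$. By the complementarity $E_1 \cap E_2 = \{0\}$ we conclude $\alpha G_1 = 0$, and since the rows of $G_1$ are $\mathbb{F}_q$-linearly independent this forces $\alpha = 0$. Hence $G_1 H_2^T$ has full rank $k_1$.

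There is no real obstacle here: the argument is a one-shot application of the LCP definition together with $\ker H_i = E_i$. The only bit that requires care is the dimension count that makes $G_1 H_2^T$ square, which is exactly why the full LCP hypothesis (rather than mere trivial intersection) is used.
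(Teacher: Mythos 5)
Your proof is correct: the observation that $\alpha G_1H_2^T=0$ forces $\alpha G_1\in E_1\cap\ker H_2=E_1\cap E_2=\{0\}$, combined with the full row rank of $G_1$ and the dimension count $k_1+k_2=n$, is exactly the standard argument for this fact. The paper does not prove this proposition itself but imports it from \cite[Corollary 2.3]{GKG20}, so there is nothing to compare against; your self-contained argument fills that gap correctly (and, as a side remark, the trivial-left-kernel step already yields $\operatorname{rank}(G_1H_2^T)=k_1$ from $E_1\cap E_2=\{0\}$ alone, the sum condition being needed only to make the matrix square).
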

\begin{proposition}
\label{rpc}
Let $E_i$ be a cyclic code of length $n$ over $\mathbb{F}_2$ with generating polynomial $g_i(x)$ and parity polynomial $h_i(x)$, respectively, for $i=1,2$, such that $g_2(x)=h_1(x)$. Then $h_2^{*}(x)=g_1^{*}(x)$ and $h_1^{*}(x)=g_2^{*}(x)$ , where $g_i^{*}(x)$ and $h_i^{*}(x)$ are reciprocal polynomials of $g_i(x)$ and $h_i(x)$, respectively, for $i=1,2$.
\end{proposition}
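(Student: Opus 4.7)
The plan is to use the defining relation of the parity polynomial of a cyclic code, namely that $g_i(x)\,h_i(x)=x^n-1$ for $i=1,2$, together with the hypothesis $g_2(x)=h_1(x)$, to deduce that $h_2(x)=g_1(x)$ as polynomials in $\mathbb{F}_2[x]$. Once this is established, the two claimed identities follow by applying the reciprocal operation, and the second identity additionally follows directly from the hypothesis.

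First I would write down the two factorizations $g_1(x)h_1(x)=x^n-1$ and $g_2(x)h_2(x)=x^n-1$, which hold by the definition of parity polynomial. Substituting $g_2(x)=h_1(x)$ into the second factorization yields $h_1(x)h_2(x)=x^n-1=g_1(x)h_1(x)$. Since $\mathbb{F}_2[x]$ is an integral domain and $h_1(x)$ is nonzero (because $g_1(x)$ divides $x^n-1$ properly whenever the code is nontrivial, and in any case $h_1(x)\ne 0$), I may cancel $h_1(x)$ to conclude $h_2(x)=g_1(x)$.

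Next, taking the reciprocal polynomial of both sides of $h_2(x)=g_1(x)$ gives $h_2^{*}(x)=g_1^{*}(x)$, which is the first claimed identity. For the second, I apply the reciprocal operation directly to the hypothesis $g_2(x)=h_1(x)$ to get $g_2^{*}(x)=h_1^{*}(x)$, i.e.\ $h_1^{*}(x)=g_2^{*}(x)$.

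There is no real obstacle here: the content of the proposition is essentially a bookkeeping consequence of the relation $g\cdot h=x^n-1$ and the fact that the reciprocal map $p(x)\mapsto p^{*}(x)$ respects equalities. The only point worth flagging is that the cancellation step relies on $h_1(x)\ne 0$, which is automatic for any genuine cyclic code of length $n$, so no further hypothesis is needed.
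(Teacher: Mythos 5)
Your proof is correct and follows essentially the same route as the paper: both rest on the relation $g_i(x)h_i(x)=x^n+1$, deduce $h_2(x)=g_1(x)$ from the hypothesis $g_2(x)=h_1(x)$, and then pass to reciprocals. Your derivation of the second identity (taking reciprocals of the hypothesis directly) is marginally more direct than the paper's, but the content is identical.
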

\begin{proof}
Since $h_2(x)=(x^n+1)/g_2(x)=(x^n+1)/h_1(x)$, so $h_2^{*}(x)= (x^n+1)/h_1^{*}(x)$.
Thus $h_2^{*}(x)=g_1^{*}(x)$.
Again $g_1(x)=(x^n+1)/h_1(x)=(x^n+1)/g_2(x)$ implies that $g_1(x)=h_2(x)$.
Now $h_1(x)=(x^n+1)/g_1(x)=(x^n+1)/h_2(x)$ implies that $h_1^{*}(x)= (x^n+1)/h_2^{*}(x)$.
Thus $ h_1^{*}(x)=g_2^{*}(x)$.
\end{proof} 
The generator matrix of an $[n,k]$ cyclic code $E=\langle g(x)\rangle$, corresponding to the generating polynomial $g(x)$, is defined as $G_g={\begin{bmatrix}g & \sigma(g)& \ldots &\sigma^{k-1}(g)\end{bmatrix}}^T$, where $g$ is the vector representation of the polynomial $g(x)$. In the following theorem, we derive a condition on the generator matrices of CCP of codes.
\begin{corollary}
\label{lcp2}
Let $E_i$ be a cyclic code of length $n$ over $\mathbb{F}_2$ with generating polynomial $g_i(x)$, respectively, for $i=1,2$, such that $g_2(x)=(x^n+1)/g_1(x)$ and $\deg (g_2(x))=k$. If $(E_1,E_2)$ is a CCP, then $\operatorname{rank}(G_{g_1}G_{g_1^{*}}^T)=k$ and $\operatorname{rank}(G_{g_2}G_{g_2^{*}}^T)=n-k$.

\end{corollary}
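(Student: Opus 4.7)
The plan is to obtain this corollary as a direct packaging of Proposition \ref{lcp1} in the cyclic setting, with Proposition \ref{rpc} used to rewrite the parity check matrices in terms of the generating polynomials.

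First I would set the dimensions straight. Since $g_1(x)g_2(x)=x^n+1$ and $\deg(g_2(x))=k$, we get $\deg(g_1(x))=n-k$, so $\dim(E_1)=k$ and $\dim(E_2)=n-k$. Consequently the row-count of $G_{g_1}$ is $k$ and that of $G_{g_2}$ is $n-k$, so the claimed ranks are the maximum possible ranks of those matrix products. This matches the $k_1=k$ and $k_2=n-k$ that Proposition \ref{lcp1} would produce once the LCP hypothesis is invoked.

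Next I would pass from the parity polynomials to the parity check matrices. For a binary cyclic code with parity polynomial $h(x)$, the standard parity check matrix can be taken to be $G_{h^*}$, where $h^*$ is the reciprocal polynomial; this is the well-known fact that the dual of $\langle g(x)\rangle$ is the cyclic code generated by $h^*(x)$. Thus, writing $H_1$ and $H_2$ for the parity check matrices of $E_1$ and $E_2$ respectively, we have $H_1 = G_{h_1^*}$ and $H_2 = G_{h_2^*}$. Since $g_2(x) = (x^n+1)/g_1(x)$ means precisely $g_2(x) = h_1(x)$ (so that $h_2(x) = g_1(x)$), Proposition \ref{rpc} yields $h_2^*(x)=g_1^*(x)$ and $h_1^*(x)=g_2^*(x)$. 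Therefore $H_2 = G_{g_1^*}$ and $H_1 = G_{g_2^*}$.

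Finally, applying Proposition \ref{lcp1} to the LCP pair $(E_1,E_2)$, we obtain $\operatorname{rank}(G_{g_1} H_2^T)=k_1=k$ and $\operatorname{rank}(G_{g_2} H_1^T)=k_2=n-k$. Substituting the identifications $H_2=G_{g_1^*}$ and $H_1=G_{g_2^*}$ gives the two claimed rank equalities. The proof is essentially bookkeeping — the only point that requires care is matching the two conventions (parity check matrix versus reciprocal of parity polynomial), and that is handled cleanly by Proposition \ref{rpc}; I do not expect any real obstacle beyond keeping the dimensions and reciprocals straight.
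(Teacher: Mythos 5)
Your proposal is correct and follows essentially the same route as the paper: identify the parity check matrices as $H_1=G_{g_2^{*}}$ and $H_2=G_{g_1^{*}}$ via Proposition \ref{rpc} (using $g_2(x)=h_1(x)$ and the fact that $H_i$ generates $E_i^\perp=\langle h_i^{*}(x)\rangle$), then apply Proposition \ref{lcp1}. The only addition is your explicit dimension bookkeeping, which the paper leaves implicit.
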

\begin{proof}
Let $H_i$ be a parity check matrix and $h_i(x)$ be a parity check polynomial of the cyclic code $E_i$, for $i=1,2$. Then $g_2(x)=(x^n+1)/g_1(x)=h_1(x)$. Therefore from the Proposition \ref{rpc}, $h_1^{*}(x)=g_2^{*}(x)$ and $h_2^{*}(x)=g_1^{*}(x)$. We know $H_i$ is the generator matrix for the cyclic code $E_i^\perp$, corresponding to its generating polynomial $h_i^{*}(x)$, for $i=1,2$. Therefore $H_1=G_{g_2^{*}}$ and $H_2=G_{g_1^{*}}$. Then from Proposition \ref{lcp1}, $\operatorname{rank}(G_{g_1}G_{g_1^{*}}^T)=k$ and $\operatorname{rank}(G_{g_2}G_{g_2^{*}}^T)=n-k$.

\end{proof} 
The main result of this section is on the dimension of ACP of ACC codes.
 \begin{theorem}
 \label{acp}
 Let $C_i$ be an additive conjucyclic codes of length $n$ over $\mathbb{F}_4$ such that $C_i=\Psi(D_i)$, for $i=1,2$. Also let $g_1(x)$ and $g_2(x)=(x^{2n}+1)/g_1(x)$ be the generating polynomials of $D_1$ and $D_2$, respectively, with $\eta_{g_1(x)}=\Psi(\xi_{g_1(x)})$ and $\eta_{g_2(x)}=\Psi(\xi_{g_2(x)})$, where $\deg(g_2(x))=k$. If $(C_1,C_2)$ is ACP, then $\operatorname{rank}(G_{\eta_{g_1(x)}} \odot_{Tr} G_{\eta_{g_1^{*}(x)}}^T)=k$ and $\operatorname{rank}(G_{\eta_{g_2(x)}} \odot_{Tr} G_{\eta_{g_2^{*}(x)}}^T)=2n-k$. Furthermore, $C_1$ and $C_2^{\perp_{Tr}}$ are equivalent.
 \end{theorem}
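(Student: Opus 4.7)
The plan is to pull the ACP condition back along the $\mathbb{F}_2$-linear isomorphism $\Psi$, apply the binary CCP result (Corollary \ref{lcp2}) to $(D_1,D_2)$, and then push the conclusion forward by using the compatibility between the trace product and the binary inner product established in Proposition \ref{csccs}. The equivalence statement is handled by identifying the trace dual with the Euclidean dual under $\Psi$ and invoking the reciprocal/reversal equivalence of cyclic codes.

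First, since $\Psi$ is an $\mathbb{F}_2$-linear bijection that intertwines the cyclic shift $\sigma$ on $\mathbb{F}_2^{2n}$ with the conjucyclic shift $T$ on $\mathbb{F}_4^n$, the hypothesis $C_1\cap C_2=\{0\}$ and $C_1+C_2=\mathbb{F}_4^n$ transfers directly to $D_1\cap D_2=\{0\}$ and $D_1+D_2=\mathbb{F}_2^{2n}$. Thus $(D_1,D_2)$ is a CCP of binary cyclic codes of length $2n$ with $g_2(x)=(x^{2n}+1)/g_1(x)$ and $\deg g_2=k$, so Corollary \ref{lcp2} yields $\operatorname{rank}(G_{g_1}G_{g_1^{*}}^T)=k$ and $\operatorname{rank}(G_{g_2}G_{g_2^{*}}^T)=2n-k$.

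Next, I would show that $G_{\eta_{g_i(x)}}\odot_{Tr}G_{\eta_{g_i^{*}(x)}}^T=G_{g_i}G_{g_i^{*}}^T$ for $i=1,2$ by an entrywise comparison. By Lemma \ref{agm} the rows of $G_{\eta_{g_i(x)}}$ are the iterated conjucyclic shifts $T^r(\Psi(\xi_{g_i(x)}))=\Psi(\sigma^r(\xi_{g_i(x)}))$, so the $(r,s)$ entry of the left-hand matrix is $T^r(\Psi(\xi_{g_i}))\odot_{Tr}T^s(\Psi(\xi_{g_i^{*}}))$, which by Proposition \ref{csccs} equals $\sigma^r(\xi_{g_i})\cdot\sigma^s(\xi_{g_i^{*}})$, precisely the $(r,s)$ entry of $G_{g_i}G_{g_i^{*}}^T$. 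Combining the two matrix identities with the rank equalities of the previous paragraph gives the required rank statements.

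For the final assertion, Proposition \ref{innsep} gives $\langle\Psi(a),\Psi(b)\rangle=[a,b]$, so the trace dual pulls back to the Euclidean dual under $\Psi$ and $C_2^{\perp_{Tr}}=\Psi(D_2^{\perp})$. Proposition \ref{rpc} identifies the generating polynomial of $D_2^{\perp}$ as $h_2^{*}(x)=g_1^{*}(x)$; since the binary cyclic codes $\langle g_1(x)\rangle=D_1$ and $\langle g_1^{*}(x)\rangle=D_2^{\perp}$ are permutation equivalent via the coordinate-reversal map on $\mathbb{F}_2^{2n}$, transporting this equivalence through $\Psi$ gives an equivalence between $C_1=\Psi(D_1)$ and $C_2^{\perp_{Tr}}=\Psi(D_2^{\perp})$. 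I expect this last step to be the principal obstacle, because the reversal on $\mathbb{F}_2^{2n}$ does not correspond to a coordinate permutation on $\mathbb{F}_4^n$ (the map $\Psi$ interleaves indices $i$ and $n+i$), so one must carefully verify that the induced $\mathbb{F}_2$-linear bijection $\Psi\circ\rho\circ\Psi^{-1}$ qualifies as an equivalence of additive codes in the sense intended by the authors (in particular, that it preserves additive structure and the Hamming weight distribution).
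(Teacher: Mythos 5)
Your proposal is correct and follows essentially the same route as the paper: transfer the ACP hypothesis to a CCP of binary cyclic codes, apply Corollary~\ref{lcp2}, convert the trace-product matrices to ordinary binary matrix products via Proposition~\ref{csccs}, and obtain the equivalence from $C_2^{\perp_{Tr}}=\Psi(D_2^\perp)$ together with the equivalence of $D_1$ and $D_2^\perp$. The subtlety you flag in the last step is real but is not addressed in the paper either --- the authors simply cite the binary equivalence of $D_1$ and $D_2^\perp$ and assert that it transports through $\Psi$.
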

 \begin{proof}
  From Lemma \ref{agm}, the generator matrix of $C_1$ is $G_{\eta_{g_1(x)}}={\begin{bmatrix}\Psi(\xi_{g_1(x)})& T(\Psi(\xi_{g_1(x)}))& \ldots &T^{k-1}(\Psi(\xi_{g_1(x)}))\end{bmatrix}}^T$ and $G_{\eta_{g_1^{*}(x)}}={\begin{bmatrix}\Psi(\xi_{g_1^{*}(x)})& T(\Psi(\xi_{g_1^{*}(x)}))& \ldots & T^{k-1}(\Psi(\xi_{g_1^{*}(x)}))\end{bmatrix}}^T$.
 Using Proposition \ref{csccs}, $G_{\eta_{g_1(x)}} \odot_{Tr} G_{\eta_{g_1^{*}(x)}}^T=G_{\xi_{g_1(x)}}\cdot G_{\xi_{g_1^{*}(x)}}^T$.
 Similarly, $G_{\eta_{g_2(x)}} \odot_{Tr} G_{\eta_{g_2^{*}(x)}}^T=G_{\xi_{g_2(x)}}\cdot G_{\xi_{g_2^{*}(x)}}^T$.
 From Corollary \ref{lcp2}, $\operatorname{rank}(G_{\eta_{g_1(x)}} \odot_{Tr} G_{\eta_{g_1^{*}(x)}}^T)=\operatorname{rank}(G_{\xi_{g_1(x)}}\cdot G_{\xi_{g_1^{*}(x)}}^T)=k$ and $\operatorname{rank}(G_{\eta_{g_2(x)}} \odot_{Tr} G_{\eta_{g_2^{*}(x)}}^T)=\operatorname{rank}(G_{\xi_{g_2(x)}}\cdot G_{\xi{g_2^{*}(x)}}^T)=2n-k$. From \cite[Theorem 2.4]{CG18}, we have $D_1$ and $D_2^\perp$ are equivalent, so are $C_1$ and $C_2^{\perp_{Tr}}$ as $C_1=\Psi(D_1)$ and $C_2^{\perp_{Tr}} = \Psi(D_2^\perp)$.
 \end{proof}
 We now illustrate the results discussed above with an example.
\begin{example}
We know that $x^{14}+1=(x+1)^2(x^3+x^2+1)^2(x^3+x+1)^2\in \mathbb{F}_2[x]$. Consider $g_1(x)=(x+1)^2(x^3+x+1)^2$ be the generating polynomial of a cyclic code $C$ of length $2n$ over $\mathbb{F}_2$. Therefore $g_2(x)=(x^3+x^2+1)^2$ and consecutively $g_1^{*}(x)=(x+1)^2(x^3+x^2+1)^2$ and $g_2^{*}(x)=(x^3+x+1)^2$. Now, $\eta_{g_1(x)}=\Psi(\xi_{g_1(x)})=(\omega^2,\omega,0,0,\omega^2,0,\omega^2)$ and $\eta_{g_2(x)}=\Psi(\xi_{g_2(x)})=(\omega^2,0,\omega^2,0,0,0,\omega^2)$. Also $\eta_{g_1^{*}(x)}=(\omega^2,\omega,\omega^2,0,\omega^2,0,0)$ and $\eta_{g_2^{*}(x)}=(\omega^2,0,0,0,\omega^2,0,\omega^2)$. Thus \\
\scalebox{0.78}{
$G_{\eta_{g_1(x)}} \odot_{Tr} G_{\eta_{g_1^{*}(x)}}^T=\begin{bmatrix} 1&0&0&0&1&0\\
0&1&0&0&0&1\\ 0&0&1&0&0&0\\ 0&0&0&1&0&0\\ 0&0&0&0&1&0\\ 0&0&0&0&0&1\end{bmatrix}$ and $G_{\eta_{g_2(x)}} \odot_{Tr} G_{\eta_{g_2^{*}(x)}}^T=\begin{bmatrix} 0&0&0&0&0&0&1&0\\ 0&0&0&0&0&0&0&1\\ 1&0&0&0&0&0&0&0\\0&1&0&0&0&0&0&0\\0&0&1&0&0&0&0&0\\0&1&0&1&0&0&0&0\\1&0&0&0&1&0&0&0\\0&1&0&0&0&1&0&0 \end{bmatrix}.$}\\[2mm]
Hence $\operatorname{rank}(G_{\eta_{g_1(x)}} \odot_{Tr} G_{\eta_{g_1^{*}(x)}}^T)=6$ and $\operatorname{rank}(G_{\eta_{g_2(x)}} \odot_{Tr} G_{\eta_{g_2^{*}(x)}}^T)=14-6=8$.
\end{example} 

 We define the Euclidean hull of a linear code $D$ over $\mathbb{F}_2$ as $\mathcal{H}=D \cap D^\perp$, i.e., intersection of $C$ with its dual $C^{\perp}$, where $D^\perp$ is the Euclidean dual of $D$. Similarly, we define hull of an ACC code $C$ over $\mathbb{F}_4$ with respect to the trace dual as the intersection of $C$ and its trace dual $C^{\perp_{Tr}}$, and is denoted by $\mathcal{H}_{Tr}$, i.e., $\mathcal{H}_{Tr}=C \cap C^{\perp_{Tr}}$. In this section, a condition on the dimension of the hull of an ACC code is derived.
\begin{proposition}   
\label{hulliso}
Let $C\subseteq \mathbb{F}_4^n$ be an ACC code over $\mathbb{F}_4$, then $\mathcal{H}_{Tr}=C \cap C^{\perp_{Tr}}=\Psi (D \cap D^\perp)=\Psi (\mathcal{H})$.
\end{proposition}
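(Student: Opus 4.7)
The plan is to reduce everything to the $\mathbb{F}_2$-isomorphism $\Psi$ and Proposition \ref{innsep}, so that the statement about the trace hull of $C$ over $\mathbb{F}_4$ becomes a statement about the Euclidean hull of $D$ over $\mathbb{F}_2$.

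First I would establish the key intermediate identity $C^{\perp_{Tr}} = \Psi(D^\perp)$. Take any $x \in \mathbb{F}_4^n$ and write $x = \Psi(a)$ for the unique $a \in \mathbb{F}_2^{2n}$ (existence and uniqueness come from $\Psi$ being an $\mathbb{F}_2$-linear isomorphism). Then $x \in C^{\perp_{Tr}}$ iff $\langle \Psi(a), \Psi(b) \rangle = 0$ for every $b \in D$ (since elements of $C$ are exactly images $\Psi(b)$ with $b \in D$). By Proposition \ref{innsep}, $\langle \Psi(a), \Psi(b) \rangle = [a, b]$, so this is equivalent to $[a, b] = 0$ for every $b \in D$, i.e., $a \in D^\perp$, i.e., $x \in \Psi(D^\perp)$.

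Next I would use that $\Psi$, being a bijection, commutes with intersections of subsets of its domain: $\Psi(D \cap D^\perp) = \Psi(D) \cap \Psi(D^\perp)$. Combining this with $C = \Psi(D)$ and the identity from the previous paragraph gives
\[
\mathcal{H}_{Tr} = C \cap C^{\perp_{Tr}} = \Psi(D) \cap \Psi(D^\perp) = \Psi(D \cap D^\perp) = \Psi(\mathcal{H}),
\]
which is exactly the claim.

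There is no real obstacle here: the work is all in Proposition \ref{innsep}, which lets the trace inner product on $\mathbb{F}_4^n$ be computed as a Euclidean inner product on $\mathbb{F}_2^{2n}$. The only point that deserves care is confirming that $\Psi$ maps $\mathbb{F}_2^{2n}$ bijectively onto $\mathbb{F}_4^n$ so that (i) every $x \in \mathbb{F}_4^n$ has a unique preimage, and (ii) $\Psi$ preserves set-theoretic intersections; both are clear from the definition of $\Psi$ recalled just before the proposition.
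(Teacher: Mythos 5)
Your proposal is correct and follows essentially the same route as the paper: both reduce the trace hull of $C$ to the Euclidean hull of $D$ via the two inclusions $\Psi(D\cap D^\perp)\subseteq C\cap C^{\perp_{Tr}}$ and the reverse, using $C=\Psi(D)$ and $C^{\perp_{Tr}}=\Psi(D^\perp)$. The only difference is that you explicitly derive the identity $C^{\perp_{Tr}}=\Psi(D^\perp)$ from Proposition~\ref{innsep} and the bijectivity of $\Psi$, whereas the paper's proof takes that identity for granted; your version is the more self-contained of the two.
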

\begin{proof}
Let $\Psi(a)\in \Psi(D\cap D^\perp)$. Then $a\in D$ and $a\in D^\perp$. This further implies that
$\Psi(a)\in \Psi(D)\cap \Psi(D^\perp)$. Therefore $\Psi(a)\in C \cap C^{\perp_{Tr}}$. Hence $\Psi(D \cap D^\perp)\subseteq C \cap C^{\perp_{Tr}}$.\\
Again, let $b\in C \cap C^{\perp_{Tr}}$. Then $b\in C=\Psi(D)$ and $b\in \Psi(D^\perp)= C^{\perp_{Tr}}$.
Thus $\Psi^{-1}(b)\in D\cap D^\perp$, which
implies that $ b\in \Psi(D\cap D^\perp)$ and therefore $C \cap C^{\perp_{Tr}}\subseteq \Psi(D \cap D^\perp)$.
Hence $ C \cap C^{\perp_{Tr}}=\Psi(D \cap D^\perp)$.
\end{proof}
In the following lemma, we find a generator matrix of the hull of an ACC code over $\mathbb{F}_4$.

\begin{proposition}
\label{hullg1}
Let $C=\Psi(D)\subseteq \mathbb{F}_4^n$ be an additive conjucyclic code over $\mathbb{F}_4$ of length $n$. Also, let $g(x)$ be the generating polynomial of the cyclic code $D$ and $h^*(x)$ be the reciprocal polynomial of $h(x)=\frac{x^{2n}+1}{g(x)}$. Let $p(x)=l.c.m\ \{g(x),\ h^*(x)\}$ and $deg\ p(x)=s$. Then $p(x)$ is the generating polynomial of the hull $\mathcal{H}=D\cap D^\perp$ and $\operatorname{dim} (\mathcal{H})=2n-s$. Also if $\eta_{p(x)}=\Psi (\xi_{p(x)})$, where $\xi_{p(x)}$ be the vector corresponding to the polynomail $p(x)$ in $D\subseteq \mathbb{F}_2^{2n}$, then an additive generator matrix $G_{Tr}$ of $\mathcal{H}_{Tr}$ is given by $G_{Tr}={\begin{bmatrix} \eta_{p(x)}&T(\eta_{p(x)})& \ldots &  T^{2n-s-1}(\eta_{p(x)}) \end{bmatrix}}^T$.
\end{proposition}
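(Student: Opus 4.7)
The plan is to reduce the statement to known facts about binary cyclic codes together with the isomorphism $\Psi$, exploiting Proposition~\ref{hulliso} and Lemma~\ref{agm}.

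First I would identify the Euclidean dual $D^\perp$ explicitly. Since $D=\langle g(x)\rangle$ is a binary cyclic code of length $2n$ with check polynomial $h(x)=(x^{2n}+1)/g(x)$, the standard cyclic-code theory gives $D^\perp=\langle h^*(x)\rangle$, where $h^*(x)$ is the reciprocal of $h(x)$. This lets me treat both $D$ and $D^\perp$ as ideals of the principal ideal ring $R=\mathbb{F}_2[x]/\langle x^{2n}+1\rangle$.

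Next I would pin down the intersection. In a principal ideal ring, the intersection of two ideals generated by divisors of $x^{2n}+1$ is generated by their least common multiple. Hence
\[
\mathcal{H}=D\cap D^\perp=\langle g(x)\rangle\cap\langle h^*(x)\rangle=\langle \mathrm{lcm}(g(x),h^*(x))\rangle=\langle p(x)\rangle.
\]
Because $p(x)$ divides $x^{2n}+1$ and has degree $s$, the usual dimension formula for cyclic codes yields $\dim_{\mathbb{F}_2}\mathcal{H}=2n-s$.

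Finally I would transfer these conclusions through the isomorphism $\Psi$. By Proposition~\ref{hulliso}, $\mathcal{H}_{Tr}=\Psi(\mathcal{H})$, and since $\mathcal{H}$ is a cyclic code generated by $p(x)$ of degree $s$, $\mathcal{H}_{Tr}$ is an ACC code. Lemma~\ref{agm}, applied with the generating polynomial $p(x)$ in place of $g(x)$ and $\eta_{p(x)}=\Psi(\xi_{p(x)})$, then immediately delivers the claimed generator matrix
\[
G_{Tr}=\begin{bmatrix}\eta_{p(x)}& T(\eta_{p(x)})& \ldots & T^{2n-s-1}(\eta_{p(x)})\end{bmatrix}^{T}.
\]
The only step requiring any care is the intersection-equals-lcm assertion; once that is cleanly justified in $R$, everything else is a direct application of results already established in the paper, so I do not expect a serious obstacle.
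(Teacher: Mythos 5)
Your proposal is correct and follows essentially the same route as the paper: identify $D^\perp=\langle h^*(x)\rangle$, get $\mathcal{H}=\langle p(x)\rangle$ of dimension $2n-s$, and push through $\Psi$ using Proposition~\ref{hulliso} and Lemma~\ref{agm}. In fact the paper's own proof is a single sentence that simply asserts the hull facts and applies $\Psi$, so your write-up supplies the justification (dual via reciprocal of the check polynomial, intersection of ideals equals the ideal of the lcm) that the paper leaves implicit.
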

\begin{proof}
 Since $\Psi (\mathcal{H})=\mathcal{H}_{Tr}$ and $p(x)$ is the generating polynomial of the hull $\mathcal{H}$, so $\eta_{p(x)}=\Psi (\xi_{p(x)})$ is a generator vector for $\mathcal{H}_{Tr}$. 
 
\end{proof} 

We now present another important result of this section. In the following theorem, a necessary and sufficient condition is obtained on the $\mathbb{F}_2$- dimension of the hull of an additive conjucyclic code with respect to trace dual.
\begin{theorem}
\label{hulld}
  Let $C$ be an $(n,2^k)$ additive conjucyclic code over $\mathbb{F}_4$ with generator matrix $G$. Then $\operatorname{dim}_{\mathbb{F}_2}(C\cap C^{\perp_{Tr}})=p$ if and only if $\operatorname{rank}(G \odot_{Tr} G^T)=k-p$.
\end{theorem}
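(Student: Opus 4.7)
The plan is to translate the intersection $C\cap C^{\perp_{Tr}}$ into the left null space of the $k\times k$ matrix $M := G \odot_{Tr} G^T$ over $\mathbb{F}_2$, and then invoke the rank-nullity theorem. Although the statement is phrased for ACC codes, the argument will use only that $G$ is a generator matrix of an additive code; the conjucyclic structure plays no direct role here.

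First, I would fix a generator matrix $G$ whose $k$ rows form an $\mathbb{F}_2$-basis of $C$, so that the $\mathbb{F}_2$-linear map $\varphi:\mathbb{F}_2^k\to C$ given by $v\mapsto vG$ is a bijection. The key computational step is then to establish, for all $v,w\in\mathbb{F}_2^k$, the identity
\[
(vG)\odot_{Tr}(wG)^T \;=\; v\,(G\odot_{Tr} G^T)\,w^T.
\]
This follows by unfolding the definition $A\odot_{Tr} B = Tr(A\cdot B)$ and pulling the binary scalars outside the trace: since the entries of $v$ and $w^T$ lie in $\mathbb{F}_2$, they satisfy $\alpha=\alpha^2$ and so commute with the Frobenius, giving $Tr(v\,GG^T\,w^T)=v\,Tr(GG^T)\,w^T$. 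Equivalently, this is two successive applications of Proposition \ref{tmproduct}(ii) (first with $A=G^T$, $E=w^T$ to absorb $w^T$, and then the analogous left-hand version to absorb $v$).

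With the identity in hand, I would characterize the intersection. A codeword $vG\in C$ belongs to $C^{\perp_{Tr}}$ iff $(vG)\odot_{Tr}(wG)^T = 0$ for every $w\in\mathbb{F}_2^k$, which by the identity above is equivalent to $v\,M = 0$. Thus $\varphi$ restricts to an $\mathbb{F}_2$-isomorphism between the left null space $N := \{v\in\mathbb{F}_2^k : vM=0\}$ and $C\cap C^{\perp_{Tr}}$. Applying rank-nullity over $\mathbb{F}_2$ yields $\dim_{\mathbb{F}_2} N = k-\operatorname{rank}(M)$, and hence
\[
\dim_{\mathbb{F}_2}(C\cap C^{\perp_{Tr}}) \;=\; k-\operatorname{rank}(G\odot_{Tr} G^T),
\]
from which the biconditional $\dim_{\mathbb{F}_2}(C\cap C^{\perp_{Tr}})=p \iff \operatorname{rank}(G\odot_{Tr} G^T)=k-p$ is immediate.

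The only mild obstacle is the bilinearity identity above; one has to be careful that the binary (as opposed to quaternary) nature of $v,w$ is what allows them to be extracted from inside $Tr(\cdot)$ on both sides of $GG^T$. Once that is verified, the remainder is a clean dimension count via the bijection $\varphi$ and rank-nullity.
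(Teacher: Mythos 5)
Your proof is correct, and it is self-contained in a way the paper's is not; the two arguments take genuinely different routes. The paper splits the equivalence into two separate implications: for the forward direction it chooses a basis of $C$ adapted to the hull (first $p$ vectors spanning $C\cap C^{\perp_{Tr}}$), observes that $Tr(GG^T)$ then has a zero block bordering a $(k-p)\times(k-p)$ block $A\cdot A^T$ built from the $\Psi$-preimages, and shows that block has full rank by a contradiction argument about row dependencies landing in $D^{\perp}$; for the converse it passes to the associated binary cyclic code $D$ via Proposition \ref{csccs} and cites the external identity $\operatorname{rank}(\mathcal{G}\mathcal{G}^T)=\dim(D)-\dim(D\cap D^{\perp})$ from \cite[Proposition 3.1]{G2018}. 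You instead prove the single equality $\dim_{\mathbb{F}_2}(C\cap C^{\perp_{Tr}})=k-\operatorname{rank}(G\odot_{Tr}G^T)$ directly, by identifying the hull with the (left) radical of the $\mathbb{F}_2$-bilinear form $(v,w)\mapsto v\,(G\odot_{Tr}G^T)\,w^T$ and applying rank--nullity; the bilinearity identity is exactly what Proposition \ref{tmproduct} licenses, since $v$ and $w$ have binary entries and so pass through $Tr$. This buys you both implications at once, avoids the basis-adaptation and the somewhat delicate full-rank-by-contradiction step, requires no external citation, and makes transparent that the conjucyclic (indeed even the cyclic) structure is irrelevant --- the statement holds for any additive code over $\mathbb{F}_4$. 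The paper's route, on the other hand, keeps the result embedded in the $\Psi$/$D$ dictionary that the rest of the section relies on. Your argument is the cleaner of the two and I see no gap in it.
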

\begin{proof}
Let us consider $C$ is an $(n,2^k)$ additive conjucyclic code over $\mathbb{F}_4$ with generator matrix $G$. Then there exist a $[2n,k]$ cyclic code over $\mathbb{F}_2$ such that $C=\Psi(D)$. Suppose that $\operatorname{dim}_{\mathbb{F}_2}(C\cap C^{\perp_{Tr}})=p$. Then there exists $\{a_1,a_2,\ldots,a_{p}\}$ a basis set for $C\cap C^{\perp_{Tr}}$ over $\mathbb{F}_2$. Then we can have a generator matrix $G$ of the ACC code $C$ over $\mathbb{F}_4$ as
$G={\begin{bmatrix} a_1 & a_2 & \ldots & a_{p} & a_{p+1}  & \ldots & a_{p+(k-p)} \end{bmatrix}}^T$. Therefore
\begin{eqnarray*}
Tr(GG^T) &=& Tr~\begin{bmatrix} a_1\cdot a_1 & a_1\cdot a_2 & \ldots & a_1\cdot a_{p+(k-p)} \\
a_2\cdot a_1 & a_2\cdot a_2 & \ldots & a_2\cdot a_{p+(k-p)}\\
\vdots & \vdots & \ddots & \vdots \\
a_{p+(k-p)}\cdot a_1 & a_{p+(k-p)}\cdot a_2 & \ldots & a_{p+(k-p)}\cdot a_{p+(k-p)}
\end{bmatrix}\\
&=& Tr~ \begin{bmatrix} O_{p\times p} & O_{p\times (k-p)}\\ O_{(k-p)\times p} & P_{(k-p)\times (k-p)}
\end{bmatrix}
~ \mbox{as}~ a_i\cdot a_j=0 ~ ~ \mbox{for all}~ 1\leq i \leq p,~ 1\leq j\leq k,
\end{eqnarray*}
where $P_{(k-p)\times (k-p)}= \begin{bmatrix} a_{p+1}\cdot a_{p+1} & a_{p+1}\cdot a_{p+2} & \ldots & a_{p+1}\cdot a_{p+(k-p)} \\
a_{p+2}\cdot a_{p+1} & a_{p+2}\cdot a_{p+2} & \ldots & a_{p+2}\cdot a_{p+(k-p)}\\
\vdots & \vdots & \ddots & \vdots \\
a_{p+(k-p)}\cdot a_{p+1} & a_{p+(k-p)}\cdot a_{p+2} & \ldots & a_{p+(k-p)}\cdot a_{p+(k-p)}
\end{bmatrix}$.\\\\
Now, 
$Tr(P_{(k-p)\times (k-p)}) \\=\begin{bmatrix} Tr(a_{p+1}\cdot a_{p+1}) & Tr(a_{p+1}\cdot a_{p+2}) & \ldots & Tr(a_{p+1}\cdot a_{p+(k-p)}) \\
Tr(a_{p+2}\cdot a_{p+1}) & Tr(a_{p+2}\cdot a_{p+2}) & \ldots & Tr(a_{p+2}\cdot a_{p+(k-p)})\\
\vdots & \vdots & \ddots & \vdots \\
Tr(a_{p+(k-p)}\cdot a_{p+1}) & Tr(a_{p+(k-p)}\cdot a_{p+2}) & \ldots & Tr(a_{p+(k-p)}\cdot a_{p+(k-p)})
\end{bmatrix}\\
= \begin{bmatrix} a_{p+1}\odot_{Tr} a_{p+1} & a_{p+1}\odot_{Tr} a_{p+2} & \ldots & a_{p+1}\odot_{Tr} a_{p+(k-p)} \\
a_{p+2}\odot_{Tr} a_{p+1} & a_{p+2}\odot_{Tr} a_{p+2} & \ldots & a_{p+2}\odot_{Tr} a_{p+(k-p)}\\
\vdots & \vdots & \ddots & \vdots \\
a_{p+(k-p)}\odot_{Tr} a_{p+1} & a_{p+(k-p)}\odot_{Tr} a_{p+2} & \ldots & a_{p+(k-p)}\odot_{Tr} a_{p+(k-p)}
\end{bmatrix}\\
= \begin{bmatrix} a_{p+1}^\prime\cdot a_{p+1}^\prime & a_{p+1}^\prime\cdot a_{p+2}^\prime & \ldots & a_{p+1}^\prime\cdot a_{p+(k-p)}^\prime \\
a_{p+2}^\prime\cdot a_{p+1}^\prime & a_{p+2}^\prime\cdot a_{p+2}^\prime & \ldots & a_{p+2}^\prime\cdot a_{p+(k-p)}^\prime\\
\vdots & \vdots & \ddots & \vdots \\
a_{p+(k-p)}^\prime\cdot a_{p+1}^\prime & a_{p+(k-p)}^\prime\cdot a_{p+2}^\prime & \ldots & a_{p+(k-p)}^\prime\cdot a_{p+(k-p)}^\prime
\end{bmatrix}$, where $a_r^\prime=\Psi^{-1}(a_r)\\
=A\cdot A^T$,  where $A={\begin{bmatrix} a_{p+1}^\prime & a_{p+2}^\prime & \ldots & a_{p+(k-p)}^\prime\end{bmatrix}}^T$.\\

Since $C=\Psi(D)$ therefore $\{a_1^\prime,a_2^\prime,\ldots,a_{p}^\prime\}$ a basis set for $D\cap D^{\perp}$ and the generator matrix of $D$ is $G^\prime={\begin{bmatrix} a_1^\prime & a_2^\prime & \ldots & a_{p}^\prime & a_{p+1}^\prime  & \ldots & a_{p+(k-p)^\prime} \end{bmatrix}}^T$. The rank of the matrix $A\cdot A^T$ is $\leq k-p$. Assume that $\operatorname{rank}(A\cdot A^T)<k-p$, then there exists a row, say $r^{th}$ row, that can be written as a linear combination of other rows in $A\cdot A^T$, i.e. $a_{p+r}^\prime\cdot a_{p+i}^\prime = x_1 a_{p+1}^\prime\cdot a_{p+i}^\prime + \ldots + x_{p+(r-1)} a_{p+(r-1)}^\prime\cdot a_{p+i}^\prime + x_{p+(r+1)} a_{p+(r+1)}^\prime\cdot a_{p+i}^\prime+\ldots+x_{p+(k-p)} a_{p+(k-p)}^\prime\cdot a_{p+i}^\prime,~ \mbox{for some }~ x_i\neq 0,~ \mbox{for each}~ 1\leq i \leq k-p$. We have $(a_{p+r}^\prime+x_1 a_{p+1}^\prime + \ldots + x_{p+(r-1)} a_{p+(r-1)}^\prime + x_{p+(r+1)} a_{p+(r+1)}^\prime+\ldots+x_{p+(k-p)} a_{p+(k-p)}^\prime)\cdot a_{p+i}^\prime=0 ~ \mbox{for all}~~ 1\leq i \leq k-p$. Therefore either $\alpha_r=0$ or $\alpha_r\in D^{\perp}$, where  $\alpha_r=a_{p+r}^\prime+x_1 a_{p+1}^\prime + \ldots + x_{p+(r-1)} a_{p+(r-1)}^\prime + x_{p+(r+1)} a_{p+(r+1)}^\prime+\ldots+x_{p+(k-p)} a_{p+(k-p)}^\prime$. None of this is true as $a_i$'s are linearly independent and if $\alpha_r=0$, then $a_{p+r}^\prime=x_1 a_{p+1}^\prime + \ldots + x_{p+(r-1)} a_{p+(r-1)}^\prime + x_{p+(r+1)} a_{p+(r+1)}^\prime+\ldots+x_{p+(k-p)} a_{p+(k-p)}^\prime$, which is not possible. In other case,  if $\alpha_r\in D^{\perp}$, then $a_{p+r}^\prime+x_1 a_{p+1}^\prime + \ldots + x_{p+(r-1)} a_{p+(r-1)}^\prime + x_{p+(r+1)} a_{p+(r+1)}^\prime+\ldots+x_{p+(k-p)} a_{p+(k-p)}^\prime \in D^{\perp}$, which is also not possible as no vector from $\{a_{p+1}^\prime,a_{p+2}^\prime,\ldots, a_{p+(k-p)}^\prime\}$ belongs to $D^{\perp}$. Hence  $\operatorname{rank}(A\cdot A^T)=k-p$.

Since the rank of the matrix $A\cdot A^T$ is $k-p$, so does the rank of $G \odot_{Tr} G^T$. Hence $\operatorname{rank}(G \odot_{Tr} G^T)=k-p$.\\
Conversely, let $\operatorname{rank}(G \odot_{Tr} G^T)=k-p$ and $\mathcal{G}$ be the generating matrix corresponding to $G$ of the code $D=\Psi^{-1}(C)$. Then from Proposition \ref{csccs}, $G \odot_{Tr} G^T=\mathcal{G}\cdot \mathcal{G}^T$. Thus $\operatorname{rank}(\mathcal{G}\cdot \mathcal{G}^T)=k-p$. We have from \cite[Proposition 3.1]{G2018} that  $\operatorname{rank}(\mathcal{G}\cdot \mathcal{G}^T)=\operatorname{dim}(D)- \operatorname{dim}(D\cap D^T)$. Hence $\operatorname{dim}_{\mathbb{F}_2}(C\cap C^{\perp_{Tr}})=\operatorname{dim}(D\cap D^T)=p$.
\end{proof}
We illustrate the above discussion with an example.
\begin{example}
We have, $x^{14}+1=(x+1)^2(x^3+x^2+1)^2(x^3+x+1)^2\in \mathbb{F}_2[x]$. Consider $g(x)=(x+1)^2(x^3+x+1)$ be the generating polynomial of a cyclic code $C$ of length $2n$ over $\mathbb{F}_2$, therefore $k=\operatorname{dim}_{\mathbb{F}_2}(C)=9$. Here, $\eta_{g(x)}=\Psi(\xi_{g(x)})=(\omega^2,\omega^2,\omega^2,0,0,\omega^2,0)$ and $p=\operatorname{dim}_{\mathbb{F}_2}(C\cap C^{\perp_{Tr}})=3$. Now,\\
\scalebox{0.9}{
$G \odot_{Tr} G^T=\begin{bmatrix}0&0&1&1&1&1&0&0&0\\ 0&0&0&1&1&1&1&0&0 \\ 1&0&0&0&1&1&1&1&0 \\ 1&1&0&0&0&1&1&1&1\\ 1&1&1&0&0&0&1&1&1\\ 1&1&1&1&0&0&0&1&1\\0&1&1&1&1&0&0&0&1\\0&0&1&1&1&1&0&0&0\\0&0&0&1&1&1&1&0&0 \end{bmatrix}$, so $\operatorname{rank}(G \odot_{Tr} G^T)=6=k-p$.
}
\end{example}

\section{Construction of Trace codes from the ACC codes}\label{Sec.5}
Let us recall the trace mapping $Tr:\mathbb{F}_4\rightarrow \mathbb{F}_2$, where $Tr(\alpha)=\alpha+\Bar{\alpha}$ and $\Bar{\alpha}=\alpha^2$ is the conjugate of the element $\alpha\in \mathbb{F}_4$. For any $a=(a_0,a_1,\ldots,a_{n-1})\in \mathbb{F}_4^n$, the trace of the vector $a$ is defined by $Tr(a)=(Tr(a_0),Tr(a_1),\ldots,Tr(a_{n-1}))$. The trace code of a code $C$ is defined as $Tr(C)=\{Tr(c)~:~c\in C\}$. We can obtain a corresponding trace code with the same length and lower cardinality for any ACC code. The motivation for developing such codes is that, in some circumstances, we can construct a new code with a greater minimum distance. Consider an ACC code $C$ of length $n$ with its trace code $Tr(C)$. This section illustrates the characterization of the trace code of an ACC code.

We define Gray weights of the elements of $\mathbb{F}_4$ as $w_G(0)=0$, $w_G(1+\omega)=1$, $w_G(\omega)=1$, and $w_G(1)=2$. Let $c=(c_0,c_1,\ldots,c_{n-1})\in \mathbb{F}_4^n$. Then, the Gray weight of the vector $c$ is $w_G(c)=\sum_{i=0}^{n-1}w_G(c_i)$. The minimum Gray weight of an ACC code $C$ is $w_G(C)=\operatorname{min}\{w_G(c):\ c\in C \}$. The Gray distance between two vectors $c$ and $d$ in $\mathbb{F}_4^n$ is $d_G(c,d)=w_G(c-d)$. The minimum Gray distance of the code $C$ is $d_G(C)=w_G(C)$. For any vector $c\in \mathbb{F}
_4^n$ of the form $c=(a_0+b_0\omega,a_1+b_1\omega,\ldots,a_{n-1}+b_{n-1}\omega)$, the Gray weight can be extended by the map $\phi:\mathbb{F}_4^n\rightarrow \mathbb{F}_2^{2n}$ such that $\phi(c)=(a_0,a_1,\ldots,a_{n-1},a_0+b_0,a_1+b_1,\ldots,a_{n-1}+b_{n-1})$. Note that $(\Psi \cdot \phi)(c) =\Psi (\phi(c) ) =\Psi(a_0,a_1,\ldots,a_{n-1},a_0+b_0,a_1+b_1,\ldots,a_{n-1}+b_{n-1})=(a_0+(a_0+a_0+b_0)\omega,a_1+(a_1+a_1+b_1)\omega,\ldots,a_{n-1}+(a_{n-1}+a_{n-1}+b_{n-1})\omega)=(a_0+b_0\omega,a_1+b_1\omega,\ldots,a_{n-1}+b_{n-1}\omega)=c$ for all $c\in \mathbb{F}_4^n$. This gives $\phi=\Psi^{-1}$.

The following theorem provides the characterization of the trace code of an ACC code.
\begin{proposition}
\label{acccyclic}
Let $C$ be an ACC code of length $n$ over $\mathbb{F}_4$, then $Tr(C)$ is a cyclic code of length $n$ over $\mathbb{F}_2$.
\end{proposition}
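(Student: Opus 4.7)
The plan is to unwind the definitions and exploit the key fact that the trace map is invariant under the Frobenius conjugation on $\mathbb{F}_4$, namely $Tr(\bar{\alpha}) = \bar{\alpha} + \alpha = Tr(\alpha)$. Once this is in hand, the conjucyclic shift on $C$ becomes an ordinary cyclic shift on $Tr(C)$, which is exactly what is needed.

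First I would verify that $Tr(C)$ is a linear (not merely additive) code over $\mathbb{F}_2$. Since the trace map $Tr:\mathbb{F}_4\to\mathbb{F}_2$ is $\mathbb{F}_2$-linear, its coordinatewise extension $Tr:\mathbb{F}_4^n\to\mathbb{F}_2^n$ is $\mathbb{F}_2$-linear as well. As $C$ is an $\mathbb{F}_2$-subspace of $\mathbb{F}_4^n$, its image $Tr(C)$ is an $\mathbb{F}_2$-subspace of $\mathbb{F}_2^n$, hence a binary linear code of length $n$.

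Next, I would establish the cyclic shift invariance. Let $u = Tr(c) \in Tr(C)$ with $c = (c_0, c_1, \ldots, c_{n-1}) \in C$. Because $C$ is ACC, the conjucyclic shift $T(c) = (\bar{c}_{n-1}, c_0, \ldots, c_{n-2})$ lies in $C$, so $Tr(T(c)) \in Tr(C)$. Applying the trace coordinatewise and using $Tr(\bar{c}_{n-1}) = Tr(c_{n-1})$, we get
\begin{equation*}
Tr(T(c)) = (Tr(\bar{c}_{n-1}), Tr(c_0), \ldots, Tr(c_{n-2})) = (Tr(c_{n-1}), Tr(c_0), \ldots, Tr(c_{n-2})) = \sigma(u).
\end{equation*}
Thus $\sigma(u) \in Tr(C)$ for every $u \in Tr(C)$, which proves that $Tr(C)$ is closed under cyclic shifts.

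There is no real obstacle here: the argument is essentially bookkeeping, and the only non-trivial ingredient is the identity $Tr(\bar{\alpha}) = Tr(\alpha)$, which is immediate from the definition since $\overline{\bar{\alpha}} = \alpha$ in $\mathbb{F}_4$. Combining linearity with cyclic-shift invariance yields that $Tr(C)$ is a cyclic code of length $n$ over $\mathbb{F}_2$.
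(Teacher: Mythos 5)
Your proof is correct and follows essentially the same route as the paper's: both identify $\sigma(Tr(c)) = Tr(T(c))$ via the identity $Tr(\bar{\alpha}) = Tr(\alpha)$ and conclude shift-invariance from the conjucyclicity of $C$. Your explicit check that $Tr(C)$ is an $\mathbb{F}_2$-subspace is a small completeness bonus the paper leaves implicit, but the core argument is identical.
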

\begin{proof}
 Let $Tr(c)\in Tr(C)$ and $\sigma$ be the cyclic shift operation. Then
  $\sigma(Tr(c))=\sigma(Tr(c_0,c_1,\ldots,c_{n-1}))
 =\sigma(c_0+\Bar{c}_0,c_1+\Bar{c}_1,\ldots,c_{n-1}+\Bar{c}_{n-1})
 =(c_{n-1}+\Bar{c}_{n-1},c_0+\Bar{c}_0,\ldots,c_{n-2}+\Bar{c}_{n-2})
 =Tr(\Bar{c}_{n-1},c_0,\ldots,c_{n-2})=Tr(T(c))$.
 Since C is conjucyclic, $T(c)\in C$. Now $\sigma(Tr(c))=Tr(T(c))\in Tr(C)$ for each $c\in C$ and therefore $\sigma(Tr(C))=Tr(C)$. Hence $C$ is cyclic.
\end{proof}
\begin{proposition}
\label{trrrwc}
    Let $C$ be an ACC code of length $n$ over $\mathbb{F}_4$, then $tr(C)\subseteq C.$    
\end{proposition}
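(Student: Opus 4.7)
The plan is to observe that applying the conjucyclic shift $T$ exactly $n$ times to a codeword coordinate-wise conjugates every entry, and then to use additivity of $C$ to conclude.

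First I would unwind the definition of $T^k(c)$ by induction on $k$. Writing $c = (c_0, c_1, \ldots, c_{n-1})$, the first application gives $T(c) = (\bar{c}_{n-1}, c_0, \ldots, c_{n-2})$; iterating, one checks that for $1 \leq k \leq n$,
\[
T^k(c) = (\bar{c}_{n-k}, \bar{c}_{n-k+1}, \ldots, \bar{c}_{n-1}, c_0, c_1, \ldots, c_{n-k-1}).
\]
In particular, setting $k = n$ yields $T^n(c) = (\bar{c}_0, \bar{c}_1, \ldots, \bar{c}_{n-1}) = \bar{c}$. Since $C$ is conjucyclic, $T^n(c) \in C$, so $\bar{c} \in C$.

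Next, because $C$ is additive (i.e., an $\mathbb{F}_2$-subspace of $\mathbb{F}_4^n$), the sum $c + \bar{c}$ lies in $C$. But coordinate-wise,
\[
c + \bar{c} = (c_0 + \bar{c}_0, c_1 + \bar{c}_1, \ldots, c_{n-1} + \bar{c}_{n-1}) = (Tr(c_0), Tr(c_1), \ldots, Tr(c_{n-1})) = Tr(c),
\]
by the definition of the trace map. Hence $Tr(c) \in C$ for every $c \in C$, giving $Tr(C) \subseteq C$.

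There is no real obstacle here; the argument reduces to the simple observation that the $n$-th power of the conjucyclic shift is the coordinate-wise conjugation map, combined with $\mathbb{F}_2$-linearity of $C$. The only point worth stating carefully in the writeup is the induction (or a direct index-tracking argument) verifying the closed form for $T^k(c)$, so that the identity $T^n(c) = \bar{c}$ is unambiguous.
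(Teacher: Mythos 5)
Your proof is correct and follows essentially the same route as the paper: both identify $T^n(c)$ as the coordinate-wise conjugate $\bar{c}$ and then use $\mathbb{F}_2$-additivity to conclude $Tr(c)=c+T^n(c)\in C$. The only difference is that you spell out the induction giving the closed form for $T^k(c)$, which the paper leaves implicit.
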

\begin{proof}
  Let $Tr(c)=(Tr(c_0),Tr(c_1),\ldots,Tr(c_{n-1}))\in Tr(C)$, where $c=(c_0,c_1,\ldots,c_{n-1})\in C$. Since $C$ is an conjucyclic code so $T^n(c)=(\Bar{c}_0,\Bar{c}_1,\ldots,\Bar{c}_{n-1})\in C$. Therefore $c+T^n(c)=(c_0+\Bar{c}_0,c_1+\Bar{c}_1,\ldots,c_{n-1}+\Bar{c}_{n-1})\in C$, i.e. $(Tr(c_0),Tr(c_1),\ldots,Tr(c_{n-1}))=Tr(c)\in C$. Thus $Tr(C)\subseteq C.$
\end{proof}

The Trace mapping defined above can be extended to $\mathbb{F}_4^n$ as $Tr:
\mathbb{F}_4^{n}\rightarrow \mathbb{F}_2^n$ by $Tr(a_0+(a_0+a_n)\omega,a_1+(a_1+a_{n+1})\omega,\ldots,a_{n-1}+(a_{n-1}+a_{2n-1})\omega)=(a_0+a_n,a_1+a_{n+1},\ldots,a_{n-1}+a_{2n-1})$. Combining the Trace mapping Tr and the $\mathbb{F}_2$-isomorphism $\Psi$, we have the following mapping defined as $\Phi=Tr\circ \Psi:
\mathbb{F}_2^{2n}\rightarrow \mathbb{F}_2^n$ by $\Phi(a_0,a_1,\ldots,a_{n-1},a_n,a_{n+1},\ldots,a_{2n-1})=(a_0+a_n,a_1+a_{n+1},\ldots,a_{n-1}+a_{2n-1})$. We can see that $\Phi$ is surjective.
\begin{theorem}
 If $D$ be a cyclic code of length $2n$ over $\mathbb{F}_2$, then $\Phi(D)$ is a cyclic code of length $n$ over  $\mathbb{F}_2$.
\end{theorem}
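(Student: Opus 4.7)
The plan is to prove this directly by showing that the cyclic shift commutes with $\Phi$ in an appropriate sense. Let $\sigma_n$ denote the cyclic shift operator on $\mathbb{F}_2^n$ and $\sigma_{2n}$ the cyclic shift operator on $\mathbb{F}_2^{2n}$. The key identity I would establish is
\[ \sigma_n(\Phi(a)) \;=\; \Phi(\sigma_{2n}(a)) \qquad \text{for every } a \in \mathbb{F}_2^{2n}. \]
Once this is in hand, the conclusion follows at once: given $v \in \Phi(D)$, pick $a \in D$ with $\Phi(a) = v$; since $D$ is cyclic, $\sigma_{2n}(a) \in D$, and therefore $\sigma_n(v) = \sigma_n(\Phi(a)) = \Phi(\sigma_{2n}(a)) \in \Phi(D)$. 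Hence $\Phi(D)$ is closed under the cyclic shift and so is a cyclic code of length $n$ over $\mathbb{F}_2$. Being $\mathbb{F}_2$-linear (as both $Tr$ and $\Psi$ are $\mathbb{F}_2$-linear, and $D$ is an $\mathbb{F}_2$-subspace) it really is a linear cyclic code.

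The verification of the displayed identity is a routine coordinate-wise calculation. Writing $a = (a_0, a_1, \ldots, a_{2n-1})$, the left-hand side is the cyclic shift of $(a_0+a_n,\, a_1+a_{n+1},\, \ldots,\, a_{n-1}+a_{2n-1})$, which equals $(a_{n-1}+a_{2n-1},\, a_0+a_n,\, a_1+a_{n+1},\, \ldots,\, a_{n-2}+a_{2n-2})$. On the right-hand side, $\sigma_{2n}(a) = (a_{2n-1}, a_0, a_1, \ldots, a_{2n-2})$, and then applying $\Phi$ (which adds coordinate $i$ to coordinate $n+i$ for $0 \le i \le n-1$) produces exactly the same vector. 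The only subtlety is book-keeping the indices.

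An alternative, more conceptual route bypasses the shift calculation entirely: by the Abualrub et al.\ correspondence used throughout the paper, $C := \Psi(D)$ is an ACC code of length $n$ over $\mathbb{F}_4$, and then $\Phi(D) = (Tr \circ \Psi)(D) = Tr(C)$ is cyclic by Proposition \ref{acccyclic}. This essentially packages the index-matching into the earlier machinery. I expect no real obstacle in either route; the result is a short consequence of how $\Psi$ interleaves the two halves of a length-$2n$ vector with the conjucyclic shift on length-$n$ vectors over $\mathbb{F}_4$.
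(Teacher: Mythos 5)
Your proposal is correct, and your ``alternative, more conceptual route'' is in fact exactly the paper's proof: the paper disposes of the theorem in one line by noting that $C=\Psi(D)$ is conjucyclic and invoking Proposition~\ref{acccyclic} on $Tr(\Psi(D))=\Phi(D)$. Your primary route --- verifying the intertwining identity $\sigma_n\circ\Phi=\Phi\circ\sigma_{2n}$ on all of $\mathbb{F}_2^{2n}$ by direct index bookkeeping --- is a genuinely different and equally valid argument; I checked the coordinate computation ($b_0+b_n=a_{2n-1}+a_{n-1}$ and $b_i+b_{n+i}=a_{i-1}+a_{n+i-1}$ for $1\le i\le n-1$, where $b=\sigma_{2n}(a)$) and it goes through. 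What the direct route buys is self-containment: it does not rely on the Abualrub et al.\ correspondence between cyclic codes of length $2n$ and ACC codes of length $n$, nor on Proposition~\ref{acccyclic}, and it establishes the commutation for arbitrary vectors rather than only for codewords of an ACC code. What the paper's route buys is brevity given the machinery already developed, and it makes transparent that the cyclicity of $\Phi(D)$ is just the shadow of the conjucyclic shift $T$ under the trace map. Either proof would be acceptable here.
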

\begin{proof}
 Since $C=\Psi(D)$ is conjucyclic, we have $Tr(\Psi(D))$ is cyclic from Proposition \ref{acccyclic}.
\end{proof}
Let $D=\langle g(x) \rangle$ be a cyclic code of length $2n$, where $g(x)$ is the generating polynomial with $\deg (g(x))=k$. Then $\Phi(D)=\langle\Phi_v(x^i g(x))\mid 0\leq i \leq 2n-k\rangle$, where $\Phi_v(x^i g(x))$ is the vector representation of $\Phi(x^i g(x))$. Now we identify every vector $\Phi(a_0,a_1,\ldots,a_{n-1},a_n,a_{n+1},\ldots,a_{2n-1})\in \Phi(D)$ as a polynomial $(a_0+a_n)+(a_1+a_{n+1})x+\ldots+(a_{n-1}+a_{2n-1})x^{n-1}$. Then the generator polynomial of the cyclic code $\mathfrak{C}=\Phi(D)$ is $r(x)=\operatorname{gcd}(\Phi_p(g(x)),x^n+1)$, where $\Phi_p(g(x))$ is the polynomial representation of $\Phi_v(g(x))$. The generator polynomial of $\Phi(D^\perp)$ is $t(x)=\operatorname{gcd}(\Phi_p(h^{*}(x)),x^n+1)$, where $h^\ast(x)$ is the reciprocal polynomial of $h(x)=(x^{2n}+1)/g(x)$.
\begin{lemma}
\label{ctc}
Let $D$ be a cyclic code of length $2n$ over $\mathbb{F}_2$ and $D^\perp$ is the dual of $D$. Then $\Phi(a)\cdot \Phi(b)=0$ for all $a\in D,~ b\in D^\perp$.
\end{lemma}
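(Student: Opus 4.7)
The plan is to expand $\Phi(a)\cdot\Phi(b)$ coordinatewise and recognize that the result splits into two Euclidean inner products of length-$2n$ vectors, each of which vanishes because $b\in D^\perp$. Write $a=(a_0,\ldots,a_{n-1},a_n,\ldots,a_{2n-1})$ and $b=(b_0,\ldots,b_{n-1},b_n,\ldots,b_{2n-1})$. From the definition of $\Phi$,
\[
\Phi(a)\cdot\Phi(b)=\sum_{i=0}^{n-1}(a_i+a_{n+i})(b_i+b_{n+i})=\sum_{i=0}^{n-1}a_ib_i+\sum_{i=0}^{n-1}a_{n+i}b_{n+i}+\sum_{i=0}^{n-1}a_ib_{n+i}+\sum_{i=0}^{n-1}a_{n+i}b_i.
\]

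Next I would identify the first two sums as $[a,b]$, so they contribute $0$ since $b\in D^\perp$. The core idea is then to produce the remaining cross terms as another inner product against $b$. For this, consider the cyclic shift $\sigma^n(a)$ of length $2n$; explicitly,
\[
\sigma^n(a)=(a_n,a_{n+1},\ldots,a_{2n-1},a_0,a_1,\ldots,a_{n-1}).
\]
Because $D$ is cyclic of length $2n$, we have $\sigma^n(a)\in D$, and orthogonality with $b\in D^\perp$ gives
\[
0=[\sigma^n(a),b]=\sum_{i=0}^{n-1}a_{n+i}b_i+\sum_{i=0}^{n-1}a_ib_{n+i},
\]
which is exactly the sum of the remaining two cross terms. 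Adding the two vanishing contributions yields $\Phi(a)\cdot\Phi(b)=0$, as required.

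The only non-routine step is spotting that a cyclic shift by precisely $n$ positions swaps the two halves of $a$, converting the cross terms $a_ib_{n+i}$ and $a_{n+i}b_i$ into the diagonal terms of an ordinary Euclidean pairing; once this is observed, the argument is a direct expansion and uses nothing beyond the cyclicity of $D$ and the definition of $D^\perp$. I do not anticipate any technical obstacle beyond being careful with the index bookkeeping for $\sigma^n$.
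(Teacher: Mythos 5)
Your proposal is correct and follows exactly the same route as the paper: expand $\Phi(a)\cdot\Phi(b)$, identify the diagonal terms with $[a,b]=0$ and the cross terms with $[\sigma^n(a),b]=0$ using the cyclicity of $D$. No differences worth noting.
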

\begin{proof}
 For any $a=(a_0,a_1,\ldots,a_{n-1},a_n,a_{n+1},\ldots,a_{2n-1})\in D$ and\\
$b=(b_0,b_1,\ldots,b_{n-1},b_n,b_{n+1},\ldots,b_{2n-1})\in D^\perp$. We have $a \cdot b=\sum_{i=0}^{2n-1}a_i b_i=0$. Since $D$ is cyclic, we have $\sigma^n(a)=(a_n,a_{n+1},\ldots,a_{2n-1},a_0,\ldots,a_{n-1})\in D$. Therefore $\sigma^n(a) \cdot b = \sum_{i=0}^{n-1}( b_i a_{n+i}+a_i b_{n+i})=0$. 

Now, we have $\Phi(a)=(a_0+a_n,a_1+a_{n+1},\ldots,a_{n-1}+a_{2n-1})$ and  $ \Phi(b) = (b_0+b_n,b_1+b_{n+1}, \ldots, b_{n-1}+b_{2n-1})$. Then $\Phi(a) \cdot \Phi(b)=\sum_{i=0}^{2n-1}a_ib_i+\sum_{i=0}^{n-1}(a_i b_{n+i}+ b_i a_{n+i})=a \cdot b + \sigma^n(a) \cdot b=0$.
\end{proof}
\begin{corollary}
\label{tracedc}
Let $C$ be an ACC code of length $n$ over $\mathbb{F}_4$ such that $C=\Psi(D)$. Then $Tr(C^{\perp_{Tr}})\subseteq {Tr(C)}^\perp$.
\end{corollary}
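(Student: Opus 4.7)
The plan is to chain together compositions of maps already introduced in the paper and then invoke Lemma \ref{ctc} as the single substantive ingredient. Specifically, the Trace–$\Psi$ composition $\Phi = Tr \circ \Psi$ is defined so that applying $Tr$ to the image of $D$ under $\Psi$ gives exactly $\Phi(D)$; this lets me translate the statement about $Tr(C)$ and $Tr(C^{\perp_{Tr}})$ into a statement about $\Phi(D)$ and $\Phi(D^\perp)$, which is governed by Lemma \ref{ctc}.

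First I would rewrite both sides in terms of $\Phi$. Since $C = \Psi(D)$, by definition $Tr(C) = Tr(\Psi(D)) = \Phi(D)$. Moreover, as established in the proof of Proposition \ref{hulliso}, the trace dual satisfies $C^{\perp_{Tr}} = \Psi(D^\perp)$, so $Tr(C^{\perp_{Tr}}) = \Phi(D^\perp)$ as well. This reduces the claim to showing $\Phi(D^\perp) \subseteq \Phi(D)^\perp$.

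Next, I would take an arbitrary element $y \in Tr(C^{\perp_{Tr}})$ and write $y = \Phi(b)$ for some $b \in D^\perp$. For any $x \in Tr(C)$, write $x = \Phi(a)$ with $a \in D$. Lemma \ref{ctc} applied to this pair gives $x \cdot y = \Phi(a) \cdot \Phi(b) = 0$. Since this holds for every $x \in Tr(C)$, we get $y \in Tr(C)^\perp$, which proves the inclusion.

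There is no real obstacle here: the entire content of the statement is already packaged into Lemma \ref{ctc}, and the work is only the bookkeeping of identifying $Tr(C)$ with $\Phi(D)$ and $Tr(C^{\perp_{Tr}})$ with $\Phi(D^\perp)$ via $C^{\perp_{Tr}} = \Psi(D^\perp)$. The only subtlety worth flagging explicitly in the written proof is that the inclusion need not be an equality, because $\Phi$ is not injective: distinct elements of $D^\perp$ may have the same $\Phi$-image, so $Tr(C^{\perp_{Tr}})$ can be strictly smaller than $Tr(C)^\perp$.
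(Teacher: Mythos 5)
Your proof is correct and follows essentially the same route as the paper's: both reduce the claim to $\Phi(D^\perp)\subseteq \Phi(D)^\perp$ via the identifications $Tr(C)=\Phi(D)$ and $Tr(C^{\perp_{Tr}})=\Phi(D^\perp)$, and then invoke Lemma \ref{ctc}. If anything, your write-up is slightly cleaner, since you justify $r=\Phi(e)$ for $e\in D^\perp$ by the identity $C^{\perp_{Tr}}=\Psi(D^\perp)$ rather than by the (less precise) appeal to surjectivity of $\Phi$ used in the paper.
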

\begin{proof}
 Let $r\in Tr(C^{\perp_{Tr}})$. Then $r=\Phi(e)$ for some $e\in D^\perp$ as $\Phi$ is surjective. This implies that $e\cdot f=0$ for all $f\in D$. Therefore by Lemma \ref{ctc}, $\Phi(e)\cdot \Phi(f)=0$ for all $\Phi(f)\in \Phi(D)$. This implies that $\Phi(e)\in {\Phi(D)}^\perp={Tr(C)}^\perp$. Hence the theorem.
\end{proof}
The reverse inclusion of the Corollary \ref{tracedc} is not true in general. We provide a counter-example in the following.
\begin{example}
Let us consider $C$ be an ACC code of length $7$ such that $C=\Psi(D)$, where $D=\langle g(x)\rangle$ and $g(x)=(1+x)^2(1+x+x^3)$. Then $\Phi_p(g(x))=(1+x)^2(1+x+x^3)$ and $r(x)=\operatorname{gcd}(\Phi_p(g(x)),x^7+1)=(1+x)(1+x+x^3)$. Therefore $\Phi(D)=\langle r(x) \rangle$ and $\Phi(D)^\perp={Tr(C)}^\perp=\langle \frac{x^7+1}{r^{*}(x)}\rangle=\langle (1+x+x^3) \rangle$. We have $D^\perp=\langle h^{*}(x) \rangle$, where $h^{*}(x)=(1+x+x^3)^2(1+x^2+x^3)$. Then $\Phi_p(h^{*}(x))=(1+x+x^3)(1+x^2+x^3)$ and $t(x)=\operatorname{gcd}(\Phi_p(h^{*}(x)),x^7+1)=(1+x+x^3)(1+x^2+x^3)$. We have $\Phi(D^\perp)=Tr(C^{\perp_{Tr}})=\langle t(x) \rangle=\langle (1+x+x^3)(1+x^2+x^3) \rangle\subseteq {Tr(C)}^\perp$, but converse ${Tr(C)}^\perp \subset Tr(C^{\perp_{Tr}})$ does not hold.

\end{example}

However, the reverse inclusion of the Corollary \ref{tracedc} is true under some conditions. 
We now establish such a condition in the following theorem.
\begin{theorem}
\label{tdc}
Let $C$ be an ACC code of length $n$ over $\mathbb{F}_4$ such that $C=\Psi(D)$, where $D=\langle g(x)\rangle$, $r(x)=\operatorname{gcd}(\Phi_p(g(x)),x^n+1)$ and $t(x)=\operatorname{gcd}(\Phi_p(h^{*}(x)),x^n+1)$. If $t(x)\mid \frac{x^n+1}{r^{*}(x)}$, then $Tr(C^{\perp_{Tr}})={Tr(C)}^\perp$.
\end{theorem}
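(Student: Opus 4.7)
The plan is to establish the reverse inclusion $\Phi(D)^\perp = {Tr(C)}^\perp \subseteq Tr(C^{\perp_{Tr}})$, since Corollary \ref{tracedc} already gives the forward inclusion. The main tool is the standard containment criterion for binary cyclic codes of length $n$: $\langle f_1(x)\rangle \subseteq \langle f_2(x)\rangle$ if and only if $f_2(x) \mid f_1(x)$.

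First I would collect explicit generator polynomials for all three cyclic codes of length $n$ in play. From the discussion preceding Lemma \ref{ctc}, $Tr(C) = \Phi(D) = \langle r(x)\rangle$ with $r(x) = \gcd(\Phi_p(g(x)), x^n+1)$, and $Tr(C^{\perp_{Tr}}) = \Phi(D^\perp) = \langle t(x)\rangle$ with $t(x) = \gcd(\Phi_p(h^*(x)), x^n+1)$. For ${Tr(C)}^\perp$, I would use the standard fact that the Euclidean dual of a cyclic code with generator $r(x)$ and parity polynomial $\tilde h(x) = (x^n+1)/r(x)$ is generated by $\tilde h^*(x)$; since reciprocation is multiplicative and $(x^n+1)^* = x^n+1$, this gives $\tilde h^*(x) = (x^n+1)/r^*(x)$, and hence ${Tr(C)}^\perp = \langle (x^n+1)/r^*(x)\rangle$ (agreeing with the computation in the preceding counterexample).

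Now the reverse inclusion ${Tr(C)}^\perp \subseteq Tr(C^{\perp_{Tr}})$ reads $\langle (x^n+1)/r^*(x)\rangle \subseteq \langle t(x)\rangle$, which by the cyclic containment criterion is equivalent to $t(x) \mid (x^n+1)/r^*(x)$. This is exactly the hypothesis of the theorem. Combining with Corollary \ref{tracedc} then yields the desired equality $Tr(C^{\perp_{Tr}}) = {Tr(C)}^\perp$.

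The proof is essentially immediate once the generator polynomials are in place, so there is no real obstacle. The only point requiring minor care is the identification ${Tr(C)}^\perp = \langle (x^n+1)/r^*(x)\rangle$, which is a routine reciprocal-polynomial manipulation but should be stated explicitly so the reader sees why the hypothesis $t(x)\mid (x^n+1)/r^*(x)$ is precisely the containment condition needed.
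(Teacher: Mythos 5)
Your proposal is correct and follows essentially the same route as the paper: the forward inclusion comes from Corollary \ref{tracedc}, and the hypothesis $t(x)\mid \frac{x^n+1}{r^{*}(x)}$ is translated via the cyclic-code containment criterion into $\langle\frac{x^n+1}{r^{*}(x)}\rangle={Tr(C)}^\perp \subseteq \langle t(x)\rangle=Tr(C^{\perp_{Tr}})$. The only difference is that you spell out the identification ${Tr(C)}^\perp=\langle (x^n+1)/r^{*}(x)\rangle$ explicitly, which the paper leaves implicit.
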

\begin{proof}
 Since $t(x)\mid \frac{x^n+1}{r^{*}(x)}$, so $\langle\frac{x^n+1}{r^{*}(x)}\rangle \subseteq \langle t(x)\rangle$. This implies ${Tr(C)}^\perp \subseteq Tr(C^{\perp_{Tr}})$. From Corollary \ref{tracedc}, $Tr(C^{\perp_{Tr}})={Tr(C)}^\perp$.
\end{proof}
\begin{example}
Let us consider $C$ be an ACC code of length $7$ such that $C=\Psi(D)$, where $D=\langle g(x)\rangle$ and $g(x)=1+x^2+x^4+x^8$. Then $\Phi_p(g(x))=(1+x)(1+x^2+x^3)$ and $r(x)=\operatorname{gcd}(\Phi_p(g(x)),x^7+1)=(1+x)(1+x^2+x^3)$. Therefore $\Phi(D)=\langle r(x) \rangle$ and $\Phi(D)^\perp={Tr(C)}^\perp=\langle \frac{x^7+1}{r^{*}(x)}\rangle=\langle (1+x^2+x^3) \rangle$. We have $D^\perp=\langle h^{*}(x) \rangle$, where $h^{*}(x)=1+x^4+x^6$. Then $\Phi_p(h^{*}(x))=1+x^4+x^6$ and $t(x)=\operatorname{gcd}(\Phi_p(h^{*}(x)),x^7+1)=1+x^2+x^3$. We have $\Phi(D^\perp)=Tr(C^{\perp_{Tr}})=\langle t(x) \rangle=\langle (1+x^2+x^3) \rangle={Tr(C)}^\perp$ and $t(x)\mid \frac{x^7+1}{r^{*}(x)}$.

\end{example}
In the following theorem, we show that the trace code of an ACD ACC code is an LCD code under a certain condition.
\begin{theorem}
Let $C=\Psi(D)$ be an additive complementary dual conjucyclic code of length $n$ over $\mathbb{F}_4$, where $D=\langle g(x)\rangle$, $r(x)=\operatorname{gcd}(\Phi_p(g(x)),x^n+1)$ and $t(x)=\operatorname{gcd}(\Phi_p(h^{*}(x)),x^n+1)$. If $t(x)\mid \frac{x^n+1}{r^{*}(x)}$, then $Tr(C)$ is a linear complementary dual code of length $n$ over $\mathbb{F}_2$.
\begin{proof}
Suppose $C=\Psi(D)$ is an additive complementary dual conjucyclic code over $\mathbb{F}_4$, i.e., $C\cap C^{\perp_{Tr}}=\{0\}$. From Theorem \ref{tdc}, we have $Tr(C^{\perp_{Tr}})={Tr(C)}^\perp$. From Proposition \ref{trrrwc}, we have  $Tr(C^{\perp_{Tr}})\subseteq C^{\perp_{Tr}}$. Therefore $Tr(C)\cap Tr(C)^\perp=Tr(C) \cap Tr(C^{\perp_{Tr}})\subseteq C \cap C^{\perp_{Tr}}=\{0\}$, i.e., $Tr(C)\cap Tr(C)^\perp=\{0\}$. Hence $Tr(C)$ is an LCD code.

\end{proof}
\end{theorem}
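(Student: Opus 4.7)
The plan is to reduce the desired LCD property of $Tr(C)$ to the ACD property of $C$ by sandwiching the intersection $Tr(C) \cap Tr(C)^{\perp}$ inside $C \cap C^{\perp_{Tr}}$. The hypothesis that $t(x) \mid (x^n+1)/r^{*}(x)$ is tailor-made to invoke Theorem \ref{tdc}, which is the bridge that turns the Euclidean dual of the trace code into the trace of the trace dual.

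First, I would unpack what it means to show $Tr(C)$ is LCD: I need $Tr(C) \cap Tr(C)^{\perp} = \{0\}$. By the divisibility hypothesis and Theorem \ref{tdc}, I can rewrite $Tr(C)^{\perp} = Tr(C^{\perp_{Tr}})$, so the goal becomes to prove $Tr(C) \cap Tr(C^{\perp_{Tr}}) = \{0\}$. Next, I would apply Proposition \ref{trrrwc} twice: once to $C$, giving $Tr(C) \subseteq C$, and once to $C^{\perp_{Tr}}$, giving $Tr(C^{\perp_{Tr}}) \subseteq C^{\perp_{Tr}}$. Combining these inclusions yields
\[
Tr(C) \cap Tr(C)^{\perp} \;=\; Tr(C) \cap Tr(C^{\perp_{Tr}}) \;\subseteq\; C \cap C^{\perp_{Tr}} \;=\; \{0\},
\]
where the final equality is the ACD hypothesis on $C$. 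This chain of inclusions immediately closes the argument.

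The main subtlety I would want to double-check is that Proposition \ref{trrrwc} is legitimately applicable to $C^{\perp_{Tr}}$, i.e., that the trace dual of an ACC code is itself an ACC code. This is where one needs the structural fact that $\Psi$ intertwines the conjucyclic shift on $\mathbb{F}_4^n$ with the cyclic shift on $\mathbb{F}_2^{2n}$, so that $C^{\perp_{Tr}} = \Psi(D^{\perp})$ is conjucyclic because $D^{\perp}$ is cyclic. This is implicit in the preceding sections (and used throughout the proof of Theorem \ref{tdc}), so I would simply cite it rather than re-prove it.

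Beyond that verification, no calculation is required: the result is essentially a formal consequence of two inclusions ($Tr(E) \subseteq E$ for any ACC code $E$) together with the identification of the dual provided by Theorem \ref{tdc}. I do not expect any technical obstacle; the only place the hypothesis on $t(x)$ and $r(x)$ enters is in quoting Theorem \ref{tdc}, and the rest is pure set-theoretic manipulation.
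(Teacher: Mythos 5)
Your proposal is correct and follows essentially the same route as the paper: invoke Theorem \ref{tdc} to identify $Tr(C)^{\perp}$ with $Tr(C^{\perp_{Tr}})$, then use Proposition \ref{trrrwc} to sandwich $Tr(C)\cap Tr(C)^{\perp}$ inside $C\cap C^{\perp_{Tr}}=\{0\}$. Your explicit check that Proposition \ref{trrrwc} applies to $C^{\perp_{Tr}}$ (because $C^{\perp_{Tr}}=\Psi(D^{\perp})$ is itself conjucyclic) is a point the paper leaves implicit, but it does not change the argument.
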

\begin{example}
Let us consider $C$ be an ACC code of length $5$ such that $C=\Psi(D)$, where $D=\langle g(x)\rangle$ and $g(x)=(1+x+x^2+x^3+x^4)^2$. Then $\Phi_p(g(x))=(1+x+x^2+x^3+x^4)$ and $r(x)=\operatorname{gcd}(\Phi_p(g(x)),x^5+1)=(1+x+x^2+x^3+x^4)$. Thus $Tr(C)=\langle r(x)\rangle$ and since $r(x)$ is a self-reciprocal polynomial therefore $Tr(C)$ is a cyclic linear complementary dual code of length $5$ over $\mathbb{F}_2$. We have $D^\perp=\langle h^{*}(x) \rangle$, where $h^{*}(x)=(1+x)^2$. Then $\Phi_p(h^{*}(x))=1+x^2$, $t(x)=\operatorname{gcd}(\Phi_p(h^{*}(x)),x^5+1)=1+x$ and $t(x)\mid \frac{x^5+1}{r^{*}(x)}$.

\end{example}

We present some optimal codes which are trace codes of ACC codes over $\mathbb{F}_4$ in Table 1.

\begin{table}[h]
    \centering
 \caption{Some Trace codes with better minimum distance from ACC codes}
 \scalebox{0.78}{
\begin{tabular}{|m{0.5cm}|m{20em}|m{7em}|m{8em}|m{5em}|}
\hline
Sl No. & Generator vector & ACC codes & Trace codes  & Remark\\
& (ACC codes)& $[n,\operatorname{dim}_{\mathbb{F}_2}(C),d_G]$ & $[n,\operatorname{dim}(Tr(C)),d_H]$ &  \\
\hline
$1$& $(\omega^2,0,\omega^2,\omega^2,\omega^2,0,0)$& $[7,10,2]$ & $[7,3,4]$ & Optimal code\\
\hline
$2$& $(\omega^2,\omega^2,0,0,0,\omega^2,\omega^2,0,0,0)$& $[10,14,2]$ & $[10,4,4]$ & Optimal code\\
\hline
$3$& $(\omega^2,0,0,\omega^2,\omega^2,\omega^2,0,0,0,0,0,0,0,0)$& $[14,23,2]$ & $[14,9,4]$ & Optimal code \\
\hline
$4$ & $(\omega^2,0,\omega^2,0,\omega^2,\omega^2,0,\omega^2,0,$&$[18,28,2]$& $[18,10,4]$& Optimal code \\
& $0,0,0,0,0,0,0,0,0)$ & & &\\
\hline
$5$ & $(\omega^2,\omega^2,\omega^2,\omega^2,0,\omega^2,\omega^2,0,$&$[23,33,4]$& $[23,11,8]$& Optimal code\\
& $0,0,0,\omega^2,0,0,0,0,0,0,0,0,0,0,0)$ & & & \\
\hline
$6$ & $(\omega^2,0,0,0,0,0,\omega^2,\omega^2,0,0,0,\omega^2,0,0,\omega^2,\omega^2,$&$[31,45,4]$& $[31,15,8]$& Optimal code\\
& $\omega^2,\omega^2,0,0,0,0,0,0,0,0,0,0,0,0,0)$ & & & \\
\hline
$7$ & $(\omega^2,0,\omega,\omega,\omega^2,0,\omega,\omega,\omega^2,0,\omega,0,0,0,0,0,0,$&$[34,59,2]$& $[34,24,4]$& Optimal code\\
& $0,0,0,0,0,0,0,0,0,0,0,0,0,0,0,0,0)$ & & & \\
\hline
$8$ & $(\omega^2,0,0,\omega,\omega,0,\omega^2,\omega^2,0,0,0,0,0,0,0,0,0,0,$&$[35,63,2]$& $[35,28,4]$& Optimal code\\
& $0,0,0,0,0,0,0,0,0,0,0,0,0,0,0,0,0)$ & & & \\

\hline
$9$ & $(\omega^2,\omega^2,\omega,0,\omega^2,0,0,0,\omega^2,0,0,\omega^2,0,0,$&$[43,56,4]$& $[43,14,14]$& Optimal code\\
& $\omega^2,\omega,0,0,\omega^2,0,0,\omega^2,0,0,0,\omega^2,0,\omega^2,\omega^2,\omega,0,$ & & & \\
& $0,0,0,0,0,0,0,0,0,0,0,0,)$&&&\\
\hline
$10$ & $(\omega^2,0,0,0,\omega^2,\omega^2,0,\omega^2,0,\omega^2,0,\omega^2,\omega^2,\omega^2,\omega^2,$&$[73,117,4]$& $[73,45,10]$& Optimal code\\
& $\omega^2,0,0,\omega^2,0,0,0,0,\omega^2,\omega^2,\omega^2,0,\omega^2,0,\omega^2,0,0,$ & & & \\
& $0,0,0,0,0,0,0,0,0,0,0,0,0,0,0,0,0,0,0,0,0,$&&&\\
& $0,0,0,0,0,0,0,0,0,0,0,0,0,0,0,0,0,0,0,0)$&&&\\
\hline
\end{tabular}}
\end{table}
\section{An application to Entanglement-assisted quantum error-correcting codes}
\label{Sec.8}

In this section, we will construct entanglement-assisted quantum error-correcting (EAQEC) codes. We will begin by using ACC codes and their trace codes to construct EAQEC codes. First, let us introduce some basic concepts and notations for quantum codes. The notation $[[n, k, d]]_q$ denotes a $q$-ary quantum code with length $n$, dimension $k$, and minimum distance $d$. An $[[n, k, d; c]]_q$ EAQEC code over $\mathbb{F}$ encodes $k$ logical qubits into $n$ physical qubits with the assistance of $c$ copies of maximally entangled states. If $c=n-k$, then it is a maximal entanglement EAQEC code. The performance of an EAQEC code is measured using the rate $\frac{k}{n}$ and net rate $\frac{k-c}{n}$. The net rate describes the rate of an EAQEC code when used as a catalytic quantum error-correcting code to create $c$ new bits of shared entanglement. The net rate of an EAQEC code can be positive, negative, or zero.
EAQEC codes with positive net rates are used as catalytic codes in quantum computing \cite{TD06}. An EAQEC code is a standard stabilizer code if $c=0$. EAQEC codes can be regarded as generalized quantum codes. In \cite{PhysRevA.103.L020601}, Markus Grassl proved that the Singleton bound for EAQEC codes is generally incorrect. Wilde and Brun provide an EAQEC code construction from classical linear codes in \cite{WB08}.

The following proposition presents a method to construct an EAQEC code from two linear codes over a finite field.
\begin{proposition}\cite[Corollary 1]{WB08}
\label{eaqebase}
Let $C_1: [n,k_1,d_1]_q$ and $C_2: [n,k_2,d_2]_q$ be two linear codes with parity check matrices $H_1$ and $H_2$, respectively. Then there exists an EAQEC code $[[n,k_1+k_2-n+c, \operatorname{min}\{d_1, d_2\}; c]]_q$, $c = \operatorname{rank}(H_1H_2^T)$ is the required number of maximally entangled states.
\end{proposition}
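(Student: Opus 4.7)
The plan is to follow the Wilde--Brun entanglement-assisted stabilizer formalism: I would turn the rows of $H_1$ and $H_2$ into candidate Pauli stabilizer generators, measure the failure of these generators to commute via the matrix $H_1 H_2^T$, and then absorb that failure into $c$ pre-shared ebits.

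First, I represent Pauli operators on $n$ qudits by symplectic vectors in $\mathbb{F}_q^{2n}$, writing an operator in the $(X\mid Z)$ form. The $n-k_1$ rows of $H_1$ give candidate generators of the form $(H_1^{(i)}\mid 0)$ (purely $X$-type), and the $n-k_2$ rows of $H_2$ give candidates of the form $(0\mid H_2^{(j)})$ (purely $Z$-type). The $X$-type rows commute among themselves, as do the $Z$-type rows, so the only possible commutation defect sits in the mixed block. The symplectic inner product of the $i$-th $X$-row with the $j$-th $Z$-row equals $(H_1 H_2^T)_{ij}$, so the entire commutation defect of this family is captured by $H_1 H_2^T$, whose rank is $c$.

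Next, I apply a symplectic Gram--Schmidt procedure to the $(n-k_1)+(n-k_2)$ candidate generators: there exists an invertible $\mathbb{F}_q$-linear change of basis under which the family splits into $c$ hyperbolic (anti-commuting) pairs and $s=(n-k_1)+(n-k_2)-2c$ mutually commuting isotropic generators. For each hyperbolic pair I attach one member to a fresh maximally entangled ebit held with the receiver; after this extension to $n+c$ qudits the resulting operators mutually commute and generate a bona fide stabilizer group on the enlarged system. Counting encoded qudits,
\[
k \;=\; n - s - c \;=\; n - \bigl((n-k_1)+(n-k_2)-2c\bigr) - c \;=\; k_1+k_2-n+c,
\]
and the number of ebits consumed equals $c = \operatorname{rank}(H_1 H_2^T)$, as claimed.

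Finally, for the distance I characterize undetectable Pauli errors on the $n$ physical qudits as symplectic vectors $(a\mid b)$ that commute with every element of the enlarged stabilizer. This forces $H_2 a^T = 0$ and $H_1 b^T = 0$, i.e.\ $a\in C_2$ and $b\in C_1$. Any nonzero $a$ contributes weight at least $d_2$ and any nonzero $b$ contributes at least $d_1$, so a nontrivial undetectable error has weight at least $\operatorname{min}\{d_1,d_2\}$. The main obstacle I anticipate is carrying out the symplectic Gram--Schmidt step rigorously and tracking, in characteristic $2$ in particular, how each hyperbolic pair is converted into a commuting operator after gluing on the ebit; the remaining arguments are straightforward dimension counting in symplectic linear algebra.
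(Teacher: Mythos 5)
The paper does not prove this proposition at all---it is imported verbatim from Wilde--Brun \cite[Corollary~1]{WB08}---and your reconstruction is precisely the standard argument from that source: the CSS-type check matrix $\bigl(\begin{smallmatrix} H_1 & 0 \\ 0 & H_2 \end{smallmatrix}\bigr)$ has symplectic Gram matrix with off-diagonal block $H_1H_2^T$, so symplectic Gram--Schmidt yields $c=\operatorname{rank}(H_1H_2^T)$ hyperbolic pairs to be absorbed by ebits, $k=n-\bigl((n-k_1)+(n-k_2)\bigr)+c=k_1+k_2-n+c$ logical qudits, and undetectable errors $(a\mid b)$ forced into $C_2\times C_1$, giving distance at least $\min\{d_1,d_2\}$. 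This is correct and is essentially the same proof the cited reference gives, so there is nothing further to reconcile with the paper.
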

\begin{lemma}\cite[ Proposition 3.1]{G2018}
\label{eaqehull}
Let $C:[n,k,d]$ be a linear code over $\mathbb{F}_2$ with parity check matrix $H$. Then $\operatorname{rank}(HH^T)=\operatorname{dim}(C^\perp)-\operatorname{dim}(C\cap C^\perp)$.
\end{lemma}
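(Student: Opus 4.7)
The plan is to interpret $HH^T$ as the Gram matrix of the basis of $C^\perp$ consisting of the rows of $H$ (with respect to the standard $\mathbb{F}_2$-bilinear form on $\mathbb{F}_2^n$) and then to read off its rank via rank-nullity applied to the natural map $v \mapsto HH^T v$ on $\mathbb{F}_2^{n-k}$. The key observation is that the kernel of this map corresponds, under the coordinate identification with $C^\perp$, precisely to $C \cap C^\perp$.

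First, I would recall that because $H$ is a parity check matrix of the $[n,k]$ code $C$, it has size $(n-k) \times n$ with full row rank, and its rows form a basis of $C^\perp$. Hence the coordinate map $\phi : \mathbb{F}_2^{n-k} \to C^\perp$ defined by $\phi(v) = v^T H$ is an isomorphism. In particular, every element of $C^\perp$ is uniquely of the form $v^T H$ for some $v \in \mathbb{F}_2^{n-k}$.

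Second, I would analyze $\ker(HH^T)$ through $\phi$. For $v \in \mathbb{F}_2^{n-k}$ set $w = \phi(v) = v^T H \in C^\perp$, so $w^T = H^T v$. Then
$$HH^T v \;=\; H(H^T v) \;=\; Hw^T,$$
which vanishes if and only if $w$ lies in $C$, by the defining property of the parity check matrix. Since $w$ already lies in $C^\perp$, this is equivalent to $w \in C \cap C^\perp$. Therefore $\phi$ restricts to an isomorphism $\ker(HH^T) \xrightarrow{\sim} C \cap C^\perp$, giving $\dim \ker(HH^T) = \dim(C \cap C^\perp)$.

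Finally, I would invoke rank-nullity for the endomorphism $HH^T$ of $\mathbb{F}_2^{n-k}$:
$$\operatorname{rank}(HH^T) \;=\; (n-k) - \dim\ker(HH^T) \;=\; \dim C^\perp - \dim(C \cap C^\perp),$$
which is the desired equality. There is no real obstacle here; the only point requiring care is keeping the row/column conventions consistent so that the identity $HH^T v = H (v^T H)^T$ lines up with the condition $v^T H \in C$. The argument is characteristic-free in spirit and uses nothing specific to $\mathbb{F}_2$ beyond the fact that the bilinear form on $\mathbb{F}_2^n$ used to define $C^\perp$ is the same one induced on $C^\perp$ by $HH^T$.
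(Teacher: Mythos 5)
Your proof is correct: identifying $\ker(HH^T)$ with $C\cap C^\perp$ via the coordinate isomorphism onto $C^\perp$ and applying rank--nullity is exactly the standard argument, and your handling of the row/column conventions is sound. The paper itself states this lemma only as a citation to \cite[Proposition 3.1]{G2018} without reproducing a proof, and your argument is essentially the one given in that reference, so there is nothing further to compare.
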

The following theorem provides the construction of EAQEC codes from the trace codes of ACC codes. Consider $H$ is the parity matrix of the trace code of ACC code $C$.
\begin{theorem}
 \label{eaql}
 Let $Tr(C):[n,k,d]$ be a Trace code of a code $C$ over $\mathbb{F}_4$ with parity check matrix $H$. Therefore there exists $[[n,k-\operatorname{dim}(\operatorname{hull}(Tr(C))),d; c]]_2$ EAQEC code, where $c = \operatorname{rank}(HH^T)$ is the required number of maximally entangled states.
\end{theorem}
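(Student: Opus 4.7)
The plan is to apply Proposition \ref{eaqebase} to the symmetric choice $C_1 = C_2 = Tr(C)$. By Proposition \ref{acccyclic}, $Tr(C)$ is a cyclic, hence linear, binary code of length $n$, so the hypotheses of Proposition \ref{eaqebase} are met. With $k_1 = k_2 = k$, $d_1 = d_2 = d$, and $H_1 = H_2 = H$, the proposition immediately produces an EAQEC code with parameters $[[n,\ 2k - n + c,\ \min\{d,d\};\ c]]_2 = [[n,\ 2k - n + c,\ d;\ c]]_2$, where $c = \operatorname{rank}(HH^T)$.

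The only remaining task is to rewrite $2k - n + c$ as $k - \dim(\operatorname{hull}(Tr(C)))$. Here I invoke Lemma \ref{eaqehull} applied directly to the linear code $Tr(C)$: the parity check matrix $H$ of $Tr(C)$ satisfies
\begin{equation*}
\operatorname{rank}(HH^T) \;=\; \dim(Tr(C)^\perp) - \dim(Tr(C) \cap Tr(C)^\perp) \;=\; (n-k) - \dim(\operatorname{hull}(Tr(C))).
\end{equation*}
Substituting this expression for $c$ into the dimension term produces
\begin{equation*}
2k - n + c \;=\; 2k - n + (n-k) - \dim(\operatorname{hull}(Tr(C))) \;=\; k - \dim(\operatorname{hull}(Tr(C))),
\end{equation*}
which matches exactly the dimension claimed in the theorem statement.

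There is essentially no real obstacle in this argument: the proof is a direct chaining of Proposition \ref{eaqebase} with Lemma \ref{eaqehull}. The main point requiring a small amount of care is the bookkeeping when specializing Proposition \ref{eaqebase} to the case $C_1 = C_2$. In that specialization the minimum distance collapses to $d$, the parity check matrices coincide, and the cross-term $\operatorname{rank}(H_1 H_2^T)$ becomes the self-term $\operatorname{rank}(HH^T)$, which is precisely the quantity computed by Lemma \ref{eaqehull}. Once this identification is made, the remainder of the proof is just arithmetic.
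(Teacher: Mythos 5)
Your proof is correct and follows exactly the same route as the paper: apply Proposition \ref{eaqebase} with $C_1=C_2=Tr(C)$ and use Lemma \ref{eaqehull} to rewrite $2k-n+c$ as $k-\operatorname{dim}(\operatorname{hull}(Tr(C)))$. The paper states this in one line; you have simply carried out the substitution explicitly.
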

\begin{proof}
    From Proposition \ref{eaqebase} and Lemma \ref{eaqehull}, set $C_1=C_2=Tr(C)$, the result holds.
\end{proof}

Table 2 presents some maximal EAQEC codes with a good minimum distance from the trace codes of ACC codes. 

\begin{landscape}
\begin{table}[h]
    \centering
\caption{Some maximal EAQECCs codes from the Trace code of ACC codes}
 \vspace{2mm}
 \scalebox{1}{
\begin{tabular}{| m{0.4cm} | m{20em}| m{1.5cm} | m{13em} | m{10em}| m{4em} |}
\hline
Sl No. & Generator vector (ACC codes) & Trace codes $[n,k,d_H]$ & New EAQEC codes $[[n,k-\operatorname{dim}(\operatorname{hull}(Tr(C))),d_H;c]]_2$  & Existing EAQEC codes (from literature) & Remark \\
\hline 
\vspace{2mm}
$1$ & $(\omega^2,\omega,0)$& $[3,2,2]$ & $[[3,2,2;1]]$ & $[[3,2,2;1]]$ & $R_1$ \\ 
\hline
\vspace{2mm}

$2$ & $(\omega^2,\omega^2,0,\omega^2,\omega^2,0,\omega^2,\omega^2,0)$ & $[9,2,6]$ & $[[9,2,6;7]]$& $[[9,2,6;7]]$, {\cite{lu2015maximal}} & $R_1$ \\
\hline
\vspace{2mm}
$3$ & $(\omega^2,\omega,0,\omega^2,\omega,0,\omega^2,\omega,0,\omega^2,\omega,0,\omega^2,\omega,0)$ & $[15,2,10]$ & $[[15,2,10;13]]$ & $[[15,2,10;13]]$, {\cite{lu2015maximal}} & $R_1$\\
\hline
\vspace{5mm}

$4$ & $(\omega^2,0,\omega,0,0,0,\omega^2,0,\omega,0,0,0,\omega^2,0,\omega,$ $0,0,0,\omega^2,0,\omega,0,0,0,\omega^2,0,\omega)$ &$[30,4,10]$ & $[[30,4,10;26]]$ & $[[30,3,9;27]]$, {\cite{liu2019new}} & $R_2$\\[3mm]
\hline
\vspace{5mm}
$5$ & $(\omega^2,0,0,0,0,0,\omega,0,\omega^2,0,\omega^2,0,\omega^2,0,0,0,0,0,$ $\omega^2,0,0,0,0,0,0,0,0,0,0,0,0,0,0,0$ &$[34,16,6]$&$[[34,16,6;18]]$& $[[34,16,4;18]]$, \cite{liu2019new} & $R_3$\\[3mm]
\hline
\vspace{5mm}
$6$ & $(\omega^2,\omega,0,0,\omega^2,\omega,0,\omega^2,0,\omega^2,0,\omega^2,\omega^2,\omega,0,0,$ $0,\omega^2,\omega,0,\omega^2,0,\omega^2,0,\omega^2,\omega,0,\omega^2$ $0,0,0,0,0,0,0,0,0,0,0)$&$[39,12,12]$&$[[39,12,12;27]]$& $[[39,9,8;30]]$, \cite{liu2019new}& $R_2$\\[3mm]
\hline
\vspace{5mm}
$7$ & $(\omega^2,\omega,0,\omega,0,0,0,0,0,0,0,0,\omega^2,\omega,0,\omega,$ $0,0,0,0,0,0,0,0,0,0,0,0,0,0,$ $0,0,0,0,0,0,0,0,0,0,0,0,0)$ &$[43,28,6]$&$[[43,28,6;15]]$& $[[43,27,4;15]]$, \cite{liu2019new} &  $R_4$\\[3mm]
\hline
\vspace{5mm}
$8$ & $(\omega^2,0,\omega,\omega^2,0,0,0,0,\omega^2,\omega^2,\omega^2,0,0,0,$ $\omega,\omega,0,0,0,\omega^2,\omega^2,\omega^2,0,0,0,0,\omega,\omega,0,\omega^2$ $0,0,0,0,0,0,0,0,0,0,0,0,0)$  &$[43,15,13]$&$[[43,15,13;28]]$& $[[43,15,9;28]]$, \cite{liu2019new} & $R_3$\\[3mm]
\hline
\end{tabular}}~\\
$R_1$ -- Best known EAQEC code so far. Our code parameters match with the known best EAQEC code.\\
$R_2$ -- Relative distance of the EAQEC code we got is better than that of the existing EAQEC Code when the entanglement is almost the same.\\
$R_3$ -- The minimum distance of the EAQEC Code we obtained is  better than that of the existing EAQEC Code.\\
$R_4$ -- The minimum distance and dimension of the EAQEC Code we got are better than that of the existing EAQEC Code.\\
\end{table}
\end{landscape}

\section{Conclusion}
 In this paper, we studied additive conjucyclic codes over $\mathbb{F}_4$ with respect to the trace dual. We obtained the trace hull and dimension of an additive conjucyclic code. Additionally, we derived a necessary and sufficient condition for a conjucyclic code to have an additive complementary dual with respect to the trace dual. Using the trace dual, we also found a condition for an additive complementary pair of conjucyclic codes over $\mathbb{F}_4$. We provided a method for constructing trace codes from additive codes and observed that the code obtained from the trace code of an ACC code has good parameters, as do the EAQEC codes. Finally, we discussed the application of EAQEC codes in our study. Using the Magma computer algebra system, we computed some EAQEC codes from trace codes of ACC codes. It would be interesting to explore the algebraic structure of conjucyclic codes using ring theory, similar to cyclic codes. Furthermore, investigating other applications of additive conjucyclic codes would also be worthwhile.

\section*{Acknowledgement}
The first author would like to thank IIIT Naya Raipur for the financial support to carry out this work. The second author is supported by the National Board of Higher Mathematics, Department of Atomic Energy, India through project No. 02011/20/2021/ NBHM(R.P)/R\&D II/8775.
\section*{Statements \& Declarations}
\subsection*{Competing Interests}
The authors have no conflict of interest to declare.
\subsection*{Availability of data and materials}
The datasets supporting the conclusions of this article are included within the article.
\bibliography{acc_code}


\end{document}